\documentclass[12pt]{iopart}
\usepackage{cite}
\usepackage{graphicx,color}
\usepackage{iopams}
\usepackage{amsthm}
\usepackage[normalem]{ulem}
\usepackage{hyperref}
\usepackage{bm}

\eqnobysec

\newtheorem{lemma}{Lemma}

\renewcommand{\Re}{\mathop{\mathrm{Re}}\nolimits}

\newcommand{\ket}[1]{|{#1}\rangle}
\newcommand{\bra}[1]{\langle{#1}|}
\newcommand{\bras}[2]{{}_{#2}\hspace*{-0.2mm}\langle{#1}|}

\newcommand{\bracket}[2]{\langle#1|#2\rangle}

\newcommand{\openone}{\mathbb{I}}

\definecolor{dgreen}{rgb}{0,0.5,0}

\definecolor{delete}{cmyk}{0.5,0,0,0}

\begin{document}
\title[Optimal Gaussian Metrology for Generic Multimode Interferometric Circuit]{Optimal Gaussian Metrology for Generic Multimode Interferometric Circuit} 

\author{Teruo Matsubara${}^1$, Paolo Facchi${}^{2,3}$, Vittorio Giovannetti${}^4$ and Kazuya Yuasa${}^1$}
\address{${}^1$ Department of Physics, Waseda University, Tokyo 169-8555, Japan}
\address{${}^2$ Dipartimento di Fisica and MECENAS, Universit\`a di Bari, I-70126 Bari, Italy}
\address{${}^3$ INFN, Sezione di Bari, I-70126 Bari, Italy}
\address{${}^4$ NEST, Scuola Normale Superiore and Istituto Nanoscienze-CNR, I-56127 Pisa, Italy}

\date\today

\begin{abstract}
Bounds on the ultimate precision attainable in the estimation of a parameter in Gaussian quantum metrology are obtained when the average number of bosonic probes is fixed.
We identify the optimal input probe state among generic (mixed in general) Gaussian states with a fixed average number of probe photons for the estimation of a parameter contained in a generic multimode interferometric optical circuit, namely, a passive linear circuit preserving the total number of photons. 
The optimal Gaussian input state is essentially a single-mode squeezed vacuum, and the ultimate precision is achieved by a homodyne measurement on the single mode.
We also reveal the best strategy for the estimation when we are given $L$ identical target circuits and are allowed to apply passive linear controls in between with an arbitrary number of ancilla modes introduced.
\end{abstract}

\maketitle

\section{Introduction}
Quantum-mechanical features and quantum effects can drastically improve the accuracy of measurements~\cite{ref:MetrologyScience,ref:QuantumMetrologyVittorio,doi:10.1080/00107510802091298,ref:MetrologyNaturePhoto,DemkowiczDobrzanski2015345,JLightwaveTechno.33.2359}.
This is known as \textit{quantum metrology}, and is one of the promising future quantum technologies.
In particular, quantum optical measurement schemes using photonic probes have recently been under intense study~\cite{ref:MetrologyScience,doi:10.1080/00107510802091298,ref:MetrologyNaturePhoto,DemkowiczDobrzanski2015345,JLightwaveTechno.33.2359}, pursuing strategies that allow  to beat the standard quantum limit on the measurement accuracy, both theoretically~\cite{PhysRevD.23.1693,PhysRevD.30.2548,PhysRevA.33.4033,PhysRevLett.71.1355,PhysRevLett.75.2944,PhysRevLett.85.5098,PhysRevA.73.011801,PhysRevA.73.033821,PhysRevA.76.013804,PhysRevLett.100.073601,PhysRevLett.101.253601,PhysRevLett.102.040403,PhysRevA.79.033834,PhysRevLett.102.253601,PhysRevLett.104.103602,PhysRevLett.105.120501,Escher:2011aa,PhysRevLett.107.083601,PhysRevA.85.010101,PhysRevA.85.011801,ref:RivasLuis-NJP2012,PhysRevA.87.012107,arXiv:1303.3682,PhysRevLett.110.163604,PhysRevLett.110.213601,PhysRevA.88.013838,PhysRevLett.111.173601,PhysRevA.88.040102,PhysRevA.88.063820,PhysRevA.89.032128,PhysRevA.89.053822,PhysRevA.90.025802,PhysRevA.90.033846,PhysRevA.91.013808,PhysRevA.91.032103,PhysRevLett.114.170802,Sparaciari:15,PhysRevA.92.042331,1367-2630-17-7-073016,PhysRevA.92.022106,ref:EstId,PhysRevA.93.013809,PhysRevA.93.023810,PhysRevA.93.033859,PhysRevA.94.023834,PhysRevX.6.031033,PhysRevA.94.033817,PhysRevA.94.042327,PhysRevA.94.042342,PhysRevLett.117.190801,PhysRevLett.117.190802,PhysRevX.6.041044,PhysRevA.94.062313,PhysRevA.95.012109,ref:MultiParGaussianMetrology,arXiv:1701.05152,arXiv:1602.05958,arXiv:1801.00299} and experimentally~\cite{Bouwmeester:2004aa,Nagata726,Higgins:2007aa,OBrien:2009aa,BridaG.:2010aa,Afek879,KacprowiczM.:2010aa,Xian:2011aa,PhysRevLett.107.080504,PhysRevA.85.043817,doi:10.1063/1.4724105,Wolfgramm:2013aa,Taylor:2013aa,Ono:2013aa,Vidrighin:2014aa,PhysRevLett.112.103604,PhysRevLett.112.223602}.

In a variety of quantum optical metrology settings, the probe sensitivity to the target parameter can be improved by squeezing the state of the input light~\cite{PhysRevD.23.1693,PhysRevD.30.2548}.
Entanglement is also an important keyword in the studies of quantum metrology~\cite{ref:MetrologyNaturePhoto,DemkowiczDobrzanski2015345,JLightwaveTechno.33.2359}.
In these ways, the state of the input probe photons is important for high precision metrology.

There is an interesting class of states of light: \textit{Gaussian states}. From a practical point of view, a variety of Gaussian states are relatively easy to generate in laboratories, and various quantum information tasks have been implemented experimentally using photons in Gaussian states~\cite{ref:ContVarQI,ref:GaussianQI,ref:ContVarQI-Adesso}.
Also from a theoretical point of view, they provide an interesting category of quantum information protocols~\cite{ref:ContVarQI,ref:GaussianQI,ref:ContVarQI-Adesso}. For these reasons, quantum optical metrology with Gaussian input probe states and/or Gaussian channels has been eagerly investigated~\cite{PhysRevA.94.062313,PhysRevA.88.040102,PhysRevA.94.042342,PhysRevA.95.012109,PhysRevA.79.033834,PhysRevA.94.023834,PhysRevA.87.012107,PhysRevA.88.013838,PhysRevA.89.032128,1367-2630-17-7-073016,PhysRevA.92.022106,PhysRevLett.100.073601,PhysRevLett.104.103602,PhysRevLett.101.253601,PhysRevA.85.010101,PhysRevA.73.033821,Sparaciari:15,PhysRevA.93.023810,ref:EstId,arXiv:1303.3682,ref:MultiParGaussianMetrology,arXiv:1801.00299}.

For instance, the estimation of a single-mode phase shift is studied with pure~\cite{PhysRevA.73.033821} and mixed~\cite{PhysRevA.79.033834} Gaussian probes, and some other single-mode Gaussian channels such as squeezing and amplitude-damping are analyzed with general mixed Gaussian probes~\cite{PhysRevA.88.040102}. The estimation of a single-mode phase shift with general mixed Gaussian probes is discussed in the presence of general Gaussian dissipation~\cite{PhysRevA.95.012109}. A few specific two-mode Gaussian channels like two-mode squeezing and mode-mixing are studied with some particular types of two-mode Gaussian probes~\cite{PhysRevA.94.062313}. The ultimate precision bound is clarified for generic two-mode passive linear circuits, which preserve the number of photons passing through them (they are Gaussian channels)~\cite{ref:EstId}. A formula for the quantum Fisher information valid for any multimode pure Gaussian states is derived and investigated under the condition of intense probe light (with large displacement)~\cite{PhysRevA.85.010101}. 
General multimode Gaussian unitary channels (Bogoliubov transformations) are considered with pure probe states not restricted to Gaussian states and the behavior of the quantum Fisher information for large mean photon numbers is discussed~\cite{PhysRevA.92.022106}.
A formula for the quantum Fisher information matrix is derived for general multimode Gaussian states and multiparameter Gaussian quantum metrology is discussed~\cite{ref:MultiParGaussianMetrology,arXiv:1801.00299}.

In this paper, we study the estimation of a parameter embedded in a generic $M$-mode passive linear interferometric circuit, and clarify the ultimate precision bound achievable with Gaussian probes.
We identify the optimal input probe state among all Gaussian states (including mixed Gaussian states) with a fixed average number of probe photons.
Such a bound is known for $M=2$~\cite{ref:EstId}, but is not known for $M\ge3$.
The proof strategy taken for $M=2$ is not helpful for $M\ge3$, and it is not a simple generalization of the previous work.

\begin{figure}[t]
\centering
\includegraphics[scale=1.2]{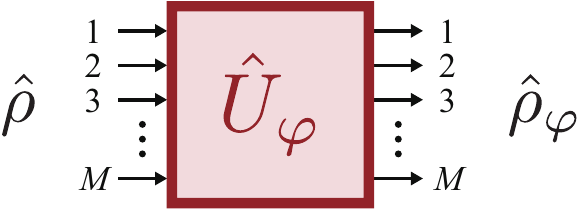}
\caption{The generic passive linear optical circuit $\hat{U}_\varphi$ with $M$ input ports and $M$ output ports. Our problem is to estimate a parameter $\varphi$ contained in the circuit $\hat{U}_\varphi$, by sending probe photons through it and measuring its output. We will restrict ourselves to Gaussian input states $\hat{\rho}$ with a given average number of probe photons $\langle\hat{N}\rangle=\overline{N}$, among which we identify the best Gaussian states reducing the accuracy limit in the estimation of $\varphi$ as much as possible.}
\label{fig:BlackBox}
\end{figure}
More specifically, we will consider the setting shown in Fig.~\ref{fig:BlackBox}: a collection of $M$ photonic modes is employed as a probe to recover the value of an unknown parameter $\varphi$, which is imprinted on the state of the probe via the action of a  passive (i.e.~photon-number preserving), Gaussian (i.e.~mapping
Gaussian input into Gaussian output), unitary transformation 
$\hat{U}_\varphi$.
 Under the assumption that the allowed  input density matrices $\hat{\rho}$  of the $M$ modes belong to the set $\mathcal{G}(M,\overline{N})$  of 
(not necessarily pure) Gaussian states  with an average photon number  $\overline{N}$, we are interested in the 
ultimate accuracy  in the estimation of $\varphi$ attainable when having full access to the output state 
\begin{equation}
\hat{\rho}_\varphi = \hat{U}_\varphi \hat{\rho} \hat{U}_\varphi^\dag.\label{OUTPUT}
\end{equation} 
Our main result consists in showing that, irrespective of the explicit form of $\hat{U}_\varphi$, 
the minimum value  of the  uncertainty  $\delta \varphi$  on the estimation of $\varphi$ is bounded from below by the Heisenberg-like scaling 
\begin{equation} \label{HEIS}
\delta \varphi_{\mathrm{min}} \gtrsim 1/\overline{N}. 
\end{equation} 
To this end,
we shall focus on the quantum Fisher information (QFI)  $F(\varphi|\hat{\rho})$ of the problem, 
 which, via  the quantum  Cram\'er-Rao inequality~\cite{ref:Helstrom,ref:BraunsteinCave1994,ref:BraunsteinCave1996AnnPhys,ref:MetrologyScience,ref:HayashiAsymptoticTheory,ref:Paris-IJQI,ref:MetrologyNaturePhoto,ref:HolevoSNS}, 
 sets a universal bound on $\delta\varphi_{\mathrm{min}}$ that is 
 independent of the adopted measurement procedure,
 \begin{equation} \label{CRQQQQ} 
\delta \varphi_{\mathrm{min}} \ge\frac{1}{\sqrt{F(\varphi|\hat{\rho})}}.
\end{equation}
 We hence prove~(\ref{HEIS}) by showing  that  the maximum value of $F(\varphi|\hat{\rho})$ attainable on the set $\mathcal{G}(M,\overline{N})$ 
 is bounded by a quantity which scales quadratically in $\overline{N}$, namely,
\begin{equation}
{F}(\varphi|\hat{\rho})\leq 
8 \|g_\varphi\|^2\overline{N}(\overline{N}+1), \quad \forall \hat{\rho} \in \mathcal{G}(M,\overline{N}).
\label{eqn:OptQFISeq}
\end{equation}
Here,  $\|g_\varphi\|$ is the spectral norm of the Hermitian matrix
\begin{equation}
g_\varphi
=iU_\varphi^\dag\frac{\rmd U_\varphi}{\rmd\varphi},
\label{eqn:Hamiltonian0}
\end{equation}
with $U_\varphi$ being  the unitary matrix describing the circuit, defined in~(\ref{eqn:UnitaryRotation}), and is independent of the input state $\hat{\rho}$.

Moreover, we show that the bound~(\ref{eqn:OptQFISeq}) is sharp and can be saturated. In fact, we  identify 
 the optimal states within $\mathcal{G}(M,\overline{N})$ that saturate the inequality~(\ref{eqn:OptQFISeq}): they are pure states~$\rho_\mathrm{opt}=\ket{\psi_\mathrm{opt}}\bra{\psi_\mathrm{opt}}$ given in~(\ref{eqn:OptPure}). 
We note that, apart from some special cases,
such optimal vectors $\ket{\psi_\mathrm{opt}}$ generally depend on the variable $\varphi$, whose unknown value we wish to determine. Therefore, the possibility of using this optimal input state for achieving the bound is not straightforward, and would require in practice the use of iterative procedures with a sequence of input states that approximate the optimal state.
Anyway, the optimal state $\ket{\psi_\mathrm{opt}}$ enables us to reach the upper bound~(\ref{eqn:OptQFISeq}).

The paper is organized as follows.
The model and the estimation problem are set up in Sec.~\ref{sec:model}\@.
In Sec.~\ref{sec:Optimization}, the maximal precision achievable by a Gaussian probe is found, first for pure Gaussian states and then for mixed Gaussian states. Moreover, we explicitly find the optimal states that achieve the maximal precision.
Two different measurement schemes are presented in Sec.~\ref{sec:Measurements}\@.
We look at a few simple examples in Sec.~\ref{sec:Examples}\@.
Furthermore, in Sec.~\ref{sec:Sequential}, we exhibit the optimal sequential strategy for the estimation when several target circuits, together with ancilla modes, are allowed to be used.
A summary of the present work is given in Sec.~\ref{sec:Conclusion}\@.
We add four appendices, containing some technical tools and proofs. 
In \ref{sec:Setup} we collect some results on Gaussian states and operations, in \ref{app:3.17} we show the derivation of a formula for the QFI, in \ref{app:Ineq} we prove some inequalities on Hermitian matrices used in the solution of the optimization problems, and \ref{app:OptMeas} contains the proof of the optimality of the measurement scheme presented in Sec.~\ref{sec:Measurements}.

\section{The Model}
\label{sec:model}
Let us consider a set of $M$ bosonic modes described by the operators $\hat{a}_m$ and $\hat{a}_m^\dag$ satisfying the canonical commutation relations 
\begin{equation} [\hat{a}_m, \hat{a}_n ]=0, \qquad  
[\hat{a}_m, \hat{a}_n^\dag ]=\delta _{mn}  \qquad (m,n=1,\ldots,M).\label{CANO} 
\end{equation}  
The passive Gaussian unitary  $\hat{U}_\varphi$ of Fig.~\ref{fig:BlackBox}   is defined by
 the mapping~\cite{ref:ContVarQI,ref:ContVarQI-Adesso}
\begin{equation}
\hat{U}_\varphi^\dag\hat{a}_m\hat{U}_\varphi=\sum_{n=1}^M(U_\varphi)_{mn}\hat{a}_n
\qquad(m=1,\ldots,M),
\label{eqn:UnitaryRotation}
\end{equation}
or simply written as $\hat{U}_\varphi^\dag\hat{\bm{a}}\hat{U}_\varphi=U_\varphi\hat{\bm{a}}$ with $\hat{\bm{a}}
=
(\ %
\hat{a}_1
\ \cdots\ %
\hat{a}_M
\ )^T$, where $U_\varphi$ is an $M\times M$ unitary matrix, whose functional dependence upon $\varphi$ is assumed to be smooth. 
We remind that this kind of transformation
preserves the total number of photons of the system, i.e. 
\begin{equation}
\hat{U}_\varphi^\dag\hat{N}\hat{U}_\varphi=\hat{N}, \qquad 
\hat{N}=\sum_{m=1}^M\hat{a}_m^\dag\hat{a}_m,
\label{eqn:N}
\end{equation}
and can be constructed by using beam splitters and phase shifters.

Our problem is to estimate the actual value of the parameter $\varphi$  embedded in $\hat{U}_\varphi$ by probing the output state $\hat{\rho}_\varphi$ in~(\ref{OUTPUT}). Consider hence 
 a generic positive operator-valued measure (POVM) $\mathcal{P}=\{\hat{\Pi}_s\}_s$~\cite{ref:NielsenChuang,ref:HayashiIshizukaKawachiKimuraOgawa}  producing measurement outcomes $s$ with probabilities
\begin{equation} \label{defp} 
p(s|\varphi)=\Tr(\hat{\Pi}_s\hat{\rho}_\varphi).
\end{equation}
The  Cram\'er-Rao inequality~\cite{ref:Helstrom,ref:BraunsteinCave1994,ref:BraunsteinCave1996AnnPhys,ref:MetrologyScience,ref:HayashiAsymptoticTheory,ref:Paris-IJQI,ref:MetrologyNaturePhoto,ref:HolevoSNS} establishes that any attempt at estimating  $\varphi$  from the values of $s$ is characterized by an uncertainty 
\begin{equation} \label{CRCCC} 
\delta\varphi \ge\frac{1}{\sqrt{F(\varphi|\mathcal{P},\hat{\rho})}},
\end{equation}
with 
\begin{equation}
F(\varphi|\mathcal{P},\hat{\rho})=\sum_sp(s|\varphi)\left(
\frac{\partial}{\partial\varphi}\ln p(s|\varphi)
\right)^2
\label{eqn:FI}
\end{equation}
being the Fisher information (FI) of the process. 
A stronger, universal bound on the attainable estimation error can now be obtained by optimizing the right-hand side of~(\ref{CRCCC}) 
with respect to all possible POVMs $\mathcal{P}$.
This yields the 
quantum Cram\'er-Rao inequality~(\ref{CRQQQQ}), with  
\begin{equation}
F(\varphi|\hat{\rho}) = \max_\mathcal{P} F(\varphi|\mathcal{P}, \hat{\rho})
\label{boundonF}
\end{equation}
being the QFI of the problem, which
by construction depends only upon  the input state $\hat{\rho}$ and the circuit $\hat{U}_\varphi$~\cite{ref:Helstrom,ref:BraunsteinCave1994,ref:BraunsteinCave1996AnnPhys,ref:MetrologyScience,ref:HayashiAsymptoticTheory,ref:Paris-IJQI,ref:MetrologyNaturePhoto,ref:HolevoSNS}. 
The maximization in~(\ref{boundonF}) can be analytically solved, yielding the following compact expression 
\begin{equation} \label{QFI} 
 F(\varphi|\hat{\rho})
=\Tr(\hat{\rho}_\varphi\hat{L}_\varphi^2),
\end{equation}
with  $\hat{L}_\varphi$ being a Hermitian operator called symmetric logarithmic derivative (SLD), satisfying 
\begin{equation}
\frac{\rmd\hat{\rho}_\varphi}{\rmd\varphi}
=\frac{1}{2}(
\hat{L}_\varphi\hat{\rho}_\varphi
+\hat{\rho}_\varphi\hat{L}_\varphi
).
\label{eqn:LyapunovEq}
\end{equation}

The goal of the present work is to optimize the value of the QFI $F(\varphi|\hat{\rho})$ in~(\ref{QFI}) with respect to a special class of allowed input states $\hat{\rho}$. 
In particular, we shall restrict the analysis to the set $\mathcal{G}(M,\overline{N})$  of $M$-mode Gaussian states with a fixed average photon number $\overline{N}$, i.e. 
\begin{equation} \label{NUMB} 
\Tr(\hat{N} \hat{\rho})= \overline{N}.
\end{equation} 
This last condition is  motivated by the fact that it is not realistic to consider  probing signals with unbounded input energy. 
It turns out that for generic (non-Gaussian) input states the constraint~(\ref{NUMB}) is not
strong enough to keep the QFI $F(\varphi|\hat{\rho})$ finite
[see for instance Ref.~\cite{ref:EstId}, where, for the case with $M=2$ input modes, obtaining finite optimal values for $F(\varphi|\hat{\rho})$ requires to impose an extra condition on the variance of $\hat{N}$ on $\hat{\rho}$; see also Ref.~\cite{ref:RivasLuis-NJP2012}], yet for Gaussian inputs this suffices and the QFI $F(\varphi|\hat{\rho})$ is finite under the constraint~(\ref{NUMB}).

\section{Optimization of QFI}
\label{sec:Optimization}
As recapitulated in \ref{sec:Setup},  an input state $\hat{\rho}$ belonging to the  Gaussian set $\mathcal{G}(M,\overline{N})$  is fully characterized by a ($2M \times 2 M$ real, symmetric, and positive-definite) covariance matrix $\Gamma$ with matrix elements
\begin{equation}
\Gamma _{mn}=\frac{1}{2}\langle\{\hat{z}_m,\hat{z}_n\}\rangle
- \langle\hat{z}_m\rangle\langle\hat{z}_n\rangle
\qquad
(m,n=1,\ldots,2M)
\label{eqn:Gamma}
\end{equation}
and a ($2M$ real column) displacement vector 
\begin{equation}
\bm{d}=\langle\hat{\bm{z}}\rangle
\label{eqn:D}
\end{equation}
which satisfy the constraint~(\ref{NUMB}), i.e.
\begin{equation}
\frac{1}{2}\Tr\!\left(
\Gamma-\frac{1}{2}
\right)
+ \frac{1}{2} \bm{d}^2
= \overline{N},
\label{eqn:constraint}
\end{equation}
where $\hat{\bm{z}}=(\ \hat{\bm{x}}\ \ \hat{\bm{y}}\ )^T$ is the quadrature operator vector with $\hat{x}_m=(\hat{a}_m+\hat{a}_m^\dag)/\sqrt{2}$ and $\hat{y}_m=(\hat{a}_m-\hat{a}_m^\dag)/\sqrt{2}i$ ($m=1,\ldots ,M$), and $\langle{}\cdots{}\rangle$ denotes the expectation value on $\hat{\rho}$.
Furthermore, since $\hat{U}_\varphi$ is a passive Gaussian unitary, the associated output state $\hat{\rho}_\varphi$ obtained as~(\ref{OUTPUT}) also belongs to $\mathcal{G}(M,\overline{N})$, and its covariance matrix  $\Gamma_\varphi$ and displacement
vector  $\bm{d}_\varphi$ depend linearly on $\Gamma$ and $\bm{d}$, as
\begin{equation}
\Gamma_\varphi=R_\varphi\Gamma R_\varphi^T
,\qquad
\bm{d}_\varphi=R_\varphi\bm{d},
\label{eqn:RotGauss}
\end{equation}
where $R_\varphi$ is the orthogonal matrix rotating the quadrature operators according to $\hat{U}_\varphi$ (see \ref{sec:Setup}).
Under this condition, the SLD fulfilling~(\ref{eqn:LyapunovEq})  can  be expressed as~\cite{arXiv:1303.3682} 
\begin{equation}
\hat{L}_\varphi
=
(
\hat{\bm{z}}
-\bm{d}_\varphi
)^T
\Lambda_\varphi
(
\hat{\bm{z}}
-\bm{d}_\varphi
)
+\frac{\partial\bm{d}_\varphi^T}{\partial\varphi}
\Gamma_\varphi^{-1}
(
\hat{\bm{z}}
-\bm{d}_\varphi
)
-\Tr(\Lambda_\varphi\Gamma_\varphi),
\label{eq:SLDGauss}
\end{equation}
and, accordingly, the 
 QFI reads~\cite{arXiv:1303.3682,ref:GaussFidelity}
\begin{equation}
F(\varphi|\hat{\rho})
=\Tr\!\left(
\Lambda_\varphi
\frac{\partial\Gamma_\varphi}{\partial\varphi}
\right)
+\frac{\partial\bm{d}_\varphi^T}{\partial\varphi}
\Gamma_\varphi^{-1}
\frac{\partial\bm{d}_\varphi}{\partial\varphi}.
\label{eqn:QFIMixGauss}
\end{equation}
Here, $\Lambda_\varphi$ is the solution to
\begin{equation}
iJ\Lambda_\varphi
-(2\Gamma_\varphi iJ)^{-1}
iJ\Lambda_\varphi
(2\Gamma_\varphi iJ)^{-1}
=-\frac{\partial(2\Gamma_\varphi iJ)^{-1}}{\partial\varphi},
\label{eqn:LambdaEq}
\end{equation}
with $J$ being the $2M \times 2M$ matrix
\begin{equation}
J=
\left(\begin{array}{@{}c|c@{}}
\,\,0&\openone\,\,\\
\hline
\,\,-\openone & 0\,\,
\end{array}\right),
\label{DEFJJJqui} 
\end{equation}
known as the symplectic form.

In the remainder of this section, we shall employ these expressions to derive the inequality~(\ref{eqn:OptQFISeq}).
The analysis will be split into two parts, addressing first the case of the pure elements of $\mathcal{G}(M,\overline{N})$ and then the case of the mixed ones. 
  For those who are familiar with QFI optimization problems, this procedure might sound unecessary. 
  Indeed, due to the convexity of QFI~\cite{ref:QFI-convexity-Fujiwara2001,ref:MetrologyNaturePhoto}, it is well known that 
pure input states perform better than mixed input states for metrological purposes.
We cannot, however, apply the same argument in the present case, and it is not obvious at first glance whether the best state is a pure state.
Indeed, even though it is true that any mixed Gaussian state can be decomposed as a convex sum of pure Gaussian states, each of the constituent of such decomposition does not necessarily satisfy the constraint~(\ref{NUMB}) on the photon number in general.
In short, the Gaussian set $\mathcal{G}(M,\overline{N})$ is not a convex set, and therefore we cannot use the convexity argument to optimize the QFI\@.
As a consequence, for the problem we are considering here, we have to address explicitly the case of mixed input states.

\subsection{Optimization among Pure Gaussian Inputs}
For a pure Gaussian state $|\psi\rangle\in\mathcal{G}(M,\overline{N})$, the symplectic eigenvalues of its covariance matrix $\Gamma$ [i.e.~the parameters $\{\sigma_1,\ldots,\sigma_M\}$ in the canonical decomposition~(\ref{eqn:SymplecticDecomp}) of $\Gamma$] are all equal to $\sigma_m=1/2$ ($m=1,\ldots,M$). 
Accordingly,
introducing a $2M\times 2M$ symplectic orthogonal matrix $R$ (i.e.~an orthogonal matrix $R$ satisfying $R^TJR=J$) and  an $M\times M$ diagonal positive matrix $r$, the covariance matrix $\Gamma$ can be decomposed as [see the canonical decomposition~(\ref{eqn:SymplecticDecomp}) of $\Gamma$ of a generic (mixed) Gaussian state]
\begin{equation}
\Gamma
=\frac{1}{2}RQ^2R^T
=\frac{1}{2}R\left(\begin{array}{@{}c|c@{}}
\,\,\,e^{r} & 0\,\, \\
\hline
\,\,\,0 & e^{-r}\,\,
\end{array}\right)^2R^T,
\label{eqn:GammaPure}
\end{equation}
while the constraint~(\ref{eqn:constraint}) on the average number becomes
\begin{equation}
\Tr\sinh^2r+\frac{1}{2}\bm{d}^2
=\overline{N},
\label{eqn:constraintpure}
\end{equation}
with $\bm{d}$ being the displacement vector of $|\psi\rangle$.

Exactly the  same properties 
hold for the covariance matrix $\Gamma_\varphi$ and the displacement 
$\bm{d}_\varphi$ of the associated  output counterpart~(\ref{OUTPUT}) of $|\psi\rangle$, which of course is also a pure element of the set $\mathcal{G}(M,\overline{N})$. 
Under this premise, the equation~(\ref{eqn:LambdaEq}) for $\Lambda_\varphi$ can be solved explicitly, yielding
\begin{equation} 
\Lambda_\varphi
=
-\frac{1}{4}\frac{\partial\Gamma_\varphi^{-1}}{\partial\varphi} \label{EXACT}.
\end{equation} 
The QFI~(\ref{eqn:QFIMixGauss}) is then reduced to~\cite{arXiv:1303.3682}
\begin{equation}
F(\varphi|\hat{\rho})
=\frac{1}{4}\Tr\!\left[
\left(
\Gamma_\varphi^{-1}
\frac{\partial\Gamma_\varphi}{\partial\varphi}
\right)^2
\right]
+\frac{\partial\bm{d}_\varphi^T}{\partial\varphi}
\Gamma_\varphi^{-1}
\frac{\partial\bm{d}_\varphi}{\partial\varphi},
\label{eqn:QFIpure}
\end{equation}
with the SLD~(\ref{eq:SLDGauss}) given by
\begin{eqnarray}
\hat{L}_\varphi
={}&
{-\frac{1}{4}}
(
\hat{\bm{z}}
-\bm{d}_\varphi
)^T
\frac{\partial\Gamma_\varphi^{-1}}{\partial\varphi}
(
\hat{\bm{z}}
-\bm{d}_\varphi
)
\nonumber\\
&{}
+\frac{\partial\bm{d}_\varphi^T}{\partial\varphi}
\Gamma_\varphi^{-1}
(
\hat{\bm{z}}
-\bm{d}_\varphi
)
-\frac{1}{4}
\Tr\!\left(
\Gamma_\varphi^{-1}
\frac{\partial\Gamma_\varphi}{\partial\varphi}
\right).
\label{eq:SLDpureGauss}
\end{eqnarray}
A further simplification can then be obtained by invoking~(\ref{eqn:RotGauss}), which expresses the functional dependence of $\Gamma_\varphi$ and $\bm{d}_\varphi$ in terms of the symplectic orthogonal matrix $R_\varphi$ representing the passive Gaussian unitary transformation $\hat{U}_\varphi$. Specifically, we get 
\begin{equation}
F(\varphi|\hat{\rho})=\frac{1}{2}\Tr(G_\varphi  \Gamma^{-1}G_\varphi \Gamma-G_\varphi^2 )+\bm{d}^T G_\varphi \Gamma^{-1} G_\varphi \bm{d},
\label{eqn:QFIlinear}
\end{equation}
and 
\begin{equation}
\hat{L}_\varphi
=
\frac{i}{4}
(
R_\varphi^T\hat{\bm{z}}
-\bm{d}
)^T
[G_\varphi,\Gamma^{-1}]
(
R_\varphi^T\hat{\bm{z}}
-\bm{d}
)
+i
\bm{d}^TG_\varphi
\Gamma^{-1}
(
R_\varphi^T\hat{\bm{z}}
-\bm{d}
),
\label{eqn:SLDlinear}
\end{equation}
where
\begin{equation}
G_\varphi=iR_\varphi^{T}\frac{\rmd R_\varphi}{\rmd \varphi}
\end{equation}
is the generator of $R_\varphi$.

Our problem is, therefore, to  maximize the QFI $F(\varphi|\hat{\rho})$ in~(\ref{eqn:QFIlinear}) with respect to  $\Gamma$  and $\bm{d}$, keeping in mind the parametrization~(\ref{eqn:GammaPure})
and the constraint~(\ref{eqn:constraintpure}). 
For this purpose, 
we start bounding the first term $F^{(1)}(\varphi|\hat{\rho})$ in the sum~(\ref{eqn:QFIlinear}).
By plugging the symplectic decomposition~(\ref{eqn:GammaPure}) of $\Gamma$, and using the parameterization~(\ref{eqn:R}) for $R$ as well as the structure~(\ref{eqn:G}) of the generator $G_\varphi$, we get (see \ref{app:3.17} for the derivation)
\begin{eqnarray}
F^{(1)}(\varphi|\hat{\rho})
&=&\frac{1}{2}\Tr(
G_\varphi\Gamma^{-1}G_\varphi\Gamma-G_\varphi^2
)\nonumber\\
&=&
\Tr[
(U^\dag g_\varphi U\cosh 2r) ^2
]
-\Tr(g_\varphi^2)
\nonumber\\
&&{}
+\Tr(
U^\dag g_\varphi U\sinh 2r\,U^T g_\varphi^*U^*\sinh 2r 
),
\label{eqn:F1}
\end{eqnarray}
where $g_\varphi$ is the generator of the unitary matrix $U_\varphi$ as introduced in (\ref{eqn:Hamiltonian0}) and involved in the structure of $G_\varphi$ in (\ref{eqn:G}), while $U$ is the unitary matrix appearing in the parametrization of $R$ in (\ref{eqn:R}).
This quantity can be bounded from above as
\begin{eqnarray}
F^{(1)}(\varphi|\hat{\rho})
&\le&
\Tr[
(U^\dag g_\varphi U)^2
\cosh^22r
]
+
\Tr[
(U^\dag g_\varphi U)^2
\sinh^22r
]
-\Tr(g_\varphi^2)
\nonumber\\
&=&2
\Tr[
(U^\dag g_\varphi U)^2
\sinh^22r
]
\nonumber\\
&\le&2
\|g_\varphi\|^2\Tr\sinh^22r,
\label{eqn:IneqF1}
\end{eqnarray}
where we have used the inequalities
\begin{eqnarray}
\Tr[(AB)^2]
\le\Tr(A^2B^2),
\label{eqn:Ineq11}\\
\Tr(A^TB^TAB)
\le\Tr(A^2B^2),
\label{eqn:Ineq22}
\end{eqnarray}
valid for Hermitian matrices $A$ and $B$, and
\begin{equation}
\Tr(AB)\le\|A\|\Tr B,
\end{equation}
valid for Hermitian and positive semi-definite matrices $A$ and $B$ (see \ref{app:Ineq} for their proofs).
Note that $g_\varphi$ is Hermitian and hence $(U^\dag g_\varphi U)^2$ is positive semi-definite, and its norm is given by $\|(U^\dag g_\varphi U)^2\|=\|g_\varphi\|^2$.
The equality in~(\ref{eqn:Ineq11}) holds if and only if $[A,B]=0$, while the equality in~(\ref{eqn:Ineq22}) is obtained if and only if $AB=(AB)^T$.

The second term $F^{(2)}(\varphi|\hat{\rho})$ in~(\ref{eqn:QFIlinear}), on the other hand, can be  bounded from above as
\begin{eqnarray}
F^{(2)}(\varphi|\hat{\rho})
&=&\bm{d}^T G_\varphi\Gamma^{-1}G_\varphi\bm{d} 
\nonumber\\
&=&
2\bm{d}^T G_\varphi RQ^{-2}R^TG_\varphi\bm{d} 
\nonumber\\
&\le&
2\|G_\varphi RQ^{-2}R^TG_\varphi\|\bm{d}^2
\nonumber\\
&\le&
2\|G_\varphi\|^2\|Q^{-2}\|\bm{d}^2
\nonumber\\
&=&2\|g_\varphi\|^2\|e^{2r}\|\bm{d}^2,
\label{eqn:F2opt}
\end{eqnarray}
where we have assumed, without loss of generality, that $r_m\ge0$ ($m=1,\ldots,M$).

Exploiting these results, we can then bound the QFI~(\ref{eqn:QFIlinear}) as 
\begin{eqnarray}
F(\varphi|\hat{\rho})
&\le&2\|g_\varphi\|^2
(
\Tr
\sinh^22r
+
\|e^{2r}\|
\bm{d}^2
)
\nonumber\\
&=&2\|g_\varphi\|^2
(
4\Tr
\sinh^2r
+4
\Tr
\sinh^4r
+
\|e^{2r}\|\bm{d}^2
)
\nonumber\\
&\le&2\|g_\varphi\|^2
[
4\Tr
\sinh^2r
+4(\Tr\sinh^2r)^2
+2\|{\cosh2r}\|
\bm{d}^2
]\nonumber\\
&=&2\|g_\varphi\|^2
[
4\Tr
\sinh^2r
+4(\Tr\sinh^2r)^2
+
(4\|{\sinh^2r}\|+2)
\bm{d}^2
]\nonumber\\
&\le&2\|g_\varphi\|^2
[
4\Tr
\sinh^2r
+4(\Tr\sinh^2r)^2
+
(4\Tr\sinh^2r+2)
\bm{d}^2
],
\label{eqn:Fopt}
\end{eqnarray}
where  we have used the inequality 
\begin{equation}
\Tr(A^2)\le(\Tr A)^2,
\end{equation}
valid for a positive semi-definite matrix $A$, which is saturated if and only if only one of the eigenvalues of $A$ is nonvanishing and it is not degenerate (see \ref{app:Ineq} for its proof).
Imposing hence the  constraint~(\ref{eqn:constraintpure}), this finally gives us 
\begin{eqnarray} 
F(\varphi|\hat{\rho}) 
&\leq&8\|g_\varphi\|^2
\left(
\overline{N}(\overline{N}+1)
-\frac{1}{4}\bm{d}^4
\right)
\le8\|g_\varphi\|^2
\overline{N}(\overline{N}+1),
\label{eqn:Fopt1}
\end{eqnarray}
which proves the inequality~(\ref{eqn:OptQFISeq}) for the case of  pure input Gaussian states.  
This result reproduces the bounds previously known for $M=1$ (single-mode phase shift)~\cite{PhysRevA.73.033821,PhysRevA.79.033834,PhysRevA.88.040102,PhysRevA.91.013808,PhysRevA.94.062313} and for $M=2$ (general two-mode passive linear circuits)~\cite{ref:EstId}, and generalizes them to $M\ge3$.

\subsubsection{Optimal States} 
The above derivation of the bound not only  proves  that the inequality~(\ref{eqn:OptQFISeq}) holds at least for the pure input states of the set $\mathcal{G}(M,\overline{N})$, but also
that the bound is saturated by a proper choice of the inputs, i.e.~by properly tuning the parameters in $\Gamma$ and $\bm{d}$. 
Let us identify such input states.
\begin{enumerate}
\item
In order to saturate the last inequality in~(\ref{eqn:Fopt1}), the necessary and sufficient condition is
\begin{equation}
\bm{d}=0.
\label{eqn:d0}
\end{equation}
\item 
Then, the last inequality in~(\ref{eqn:Fopt}) is automatically saturated, and the second inequality in~(\ref{eqn:Fopt}) is saturated if and only if only one (e.g.~the first) of the squeezing parameters $\{r_1,\ldots,r_M\}$ of the matrix $r$ is nonvanishing.
Let us put the nonvanishing squeezing parameter $r_0\,(>0)$ in the first mode, 
\begin{equation}
r=
\left(\begin{array}{cccc}
r_0&&&\\
&0&&\\[-1truemm]
&&\ddots&\\
&&&0
\end{array}\right).
\label{eqn:ropt}
\end{equation}
\item
The equality in~(\ref{eqn:F2opt}) is trivially satisfied, since $\bm{d}$ is required to be vanishing in~(\ref{eqn:d0}).
\item The last inequality in~(\ref{eqn:IneqF1}) is saturated if and only if the vector $(1\ \ 0\ \ \cdots\ \ 0)^T$, corresponding to the first mode, belongs to the eigenspace of $(U^\dag g_\varphi U)^2$ associated with its largest eigenvalue.
The choice
\begin{equation}
U=V_\varphi,
\label{eqn:UV}
\end{equation}
with $V_\varphi$ introduced in~(\ref{eqn:Hamiltonian11}) to diagonalize $g_\varphi$, suffices to fulfill this condition. 
Note that the eigenvalues $\{\varepsilon_1,\ldots,\varepsilon_M\}$ of $g_\varphi$ in~(\ref{eqn:Hamiltonian11}) are ordered in descending order in their magnitudes.
\item The first inequality in~(\ref{eqn:IneqF1}) is saturated if and only if both conditions
\begin{equation}
\left\{
\begin{array}{l}
\medskip
[U^\dag g_\varphi U,\cosh2r]=0,\\
U^\dag g_\varphi U
\sinh 2r
=(U^\dag g_\varphi U
\sinh 2r
)^T
\end{array}\right.
\label{eqn:CondEqsF1}
\end{equation}
are satisfied: recall the conditions for the equalities in~(\ref{eqn:Ineq11}) and~(\ref{eqn:Ineq22}).
These conditions are already satisfied with the above tunings of $r$ and $U$ in~(\ref{eqn:ropt}) and~(\ref{eqn:UV}).
\item
Finally, since $\bm{d}=0$, all the photons are spent for the squeezing $r_0$ in the first mode.
The constraint on the mean photon number $\overline{N}$ in~(\ref{eqn:constraintpure}) yields
\begin{equation}
r_0=\ln\!\left(\sqrt{\overline{N}}+\sqrt{\overline{N}+1}\right).
\label{eqn:r0}
\end{equation}
\end{enumerate}

Putting all these conditions together, it follows  that the state achieving the upper bound in~(\ref{eqn:Fopt1}) among the pure Gaussian input states of $\mathcal{G}(M,\overline{N})$ is a \textit{single-mode squeezed vacuum} with zero displacement~(\ref{eqn:d0}) and a squeezing $r$ given by~(\ref{eqn:ropt}) and (\ref{eqn:r0}), and rotated by the unitary~(\ref{eqn:UV}), i.e.~the vector 
\begin{equation}
\ket{\psi_\mathrm{opt}}
=\hat{V}_\varphi\hat{S}_1(r_0)\ket{0},
\label{eqn:OptPure}
\end{equation}
with  $\ket{0}$ the vacuum state and
$
\hat{S}_1(\xi)=e^{\frac{1}{2}(\xi\hat{a}_1^{\dag2}-\xi^*\hat{a}_1^2)}
$
the squeezing operator on the first mode.

A couple of comments are in order.
First, recall that $\hat{V}_\varphi$ is the passive linear transformation characterized by $\hat{V}_\varphi^\dag\hat{\bm{a}}\hat{V}_\varphi=V_\varphi\hat{\bm{a}}$ with the $M\times M$ unitary matrix $V_\varphi$  diagonalizing the generator $g_\varphi$ of the circuit as in~(\ref{eqn:Hamiltonian11}).
It redefines the modes of the system in a way that allows us to describe the optimal state as a configuration with all the photons injected into the first mode only [i.e.~the one with the largest (in magnitude) eigenvalue of $g_\varphi$]. 
We stress, however, that even after this ``reorganization'' the modes other than the first one are not free from the target parameter $\varphi$ in general, due to the subsequent propagation induced by $\hat{U}_\varphi$, and the problem is not reduced to a single-mode problem. It remains intrinsically a multimode problem, and we cannot simply apply the results known for single-mode estimation problems.
Second, as indicated by the notation, the transformation $\hat{V}_\varphi$ may depend upon the target parameter $\varphi$ for a generic choice of $\hat{U}_\varphi$, and so may do the optimal state $|\psi_\mathrm{opt}\rangle$. 
Therefore, if that is the case, it would not be easy to prepare this optimal state $|\psi_\mathrm{opt}\rangle$ without knowing the value of the parameter $\varphi$, which we intend to estimate, and an adaptive strategy updating the estimate of $\varphi$ would be required in practice.

\subsection{Optimization among Mixed Gaussian Inputs}
We have just shown that the inequality~(\ref{eqn:OptQFISeq}) holds at least for the pure elements of the set $\mathcal{G}(M,\overline{N})$.
Here, we are going to generalize this by showing that the same result holds for the mixed elements of the set $\mathcal{G}(M,\overline{N})$.

We first point out that any mixed Gaussian state $\hat{\rho}_{\Gamma,\bm{d}}$, characterized by a covariance matrix $\Gamma$ and a displacement $\bm{d}$, can be expressed as a mixture of pure Gaussian states $\hat{\rho}_{\Gamma_0,\bm{d}-\bm{\xi}}$ as 
\begin{equation}
\hat{\rho}_{\Gamma,\bm{d}}
=\int \rmd^{2M}\bm{\xi}\,P_\Gamma(\bm{\xi})\hat{\rho}_{\Gamma_0,\bm{d}-\bm{\xi}}
\label{eqn:ConvexDecomp}
\end{equation}
with a Gaussian probability distribution
\begin{equation}
P_\Gamma(\bm{\xi})=\frac{e^{-\frac{1}{2}\bm{\xi}^T(\Gamma-\Gamma_0)^{-1}\bm{\xi}}}{\sqrt{(2\pi)^{2M}\det(\Gamma-\Gamma _0)}}.
\label{eqn:GaussianWeight}
\end{equation}
In these expressions,  $\Gamma_0$ is the pure-state covariance matrix obtained by taking the symplectic decomposition~(\ref{eqn:SymplecticDecomp}) of the original
covariance matrix $\Gamma$ and 
replacing all the symplectic eigenvalues $\{\sigma_1,\ldots,\sigma_M\}$ of the latter with $1/2$, i.e. 
\begin{equation}
\Gamma_0=\frac{1}{2}RQ^2R^T,
\end{equation}
keeping the squeezing matrix $Q$ and the symplectic orthogonal matrix $R$ of $\Gamma$ unchanged.
By construction, it follows that 
\begin{equation} 
\Gamma-\Gamma_0\ge0, \label{IMPO} \end{equation} 
since all the symplectic eigenvalues $\{\sigma_1,\ldots,\sigma_M\}$ of any $\Gamma$ are greater than or equal to $1/2$.
The convex decomposition~(\ref{eqn:ConvexDecomp}) can be verified by looking at the characteristic function $\chi_{\Gamma,\bm{d}}(\bm{\eta})$ for the Gaussian state $\hat{\rho}_{\Gamma,\bm{d}}$ in~(\ref{eqn:CharFunc}): by direct computation, we can check that
\begin{equation}
\int \rmd^{2M}\bm{\xi}\,P_\Gamma(\bm{\xi})\chi_{\Gamma_0,\bm{d}-\bm{\xi}}(\bm{\eta})
=\chi_{\Gamma,\bm{d}}(\bm{\eta}),
\end{equation}
which is equivalent to (\ref{eqn:ConvexDecomp}).
Note that the pure Gaussian states $\hat{\rho}_{\Gamma_0,\bm{d}-\bm{\xi}}$ in the convex sum~(\ref{eqn:ConvexDecomp}) do \textit{not} satisfy the constraint~(\ref{eqn:constraint})
 on the mean photon number in general, while the original mixed state $\hat{\rho}_{\Gamma,\bm{d}}$ should do.
Yet, by using the convexity of the QFI and the last inequality appearing in~(\ref{eqn:Fopt}), which holds for pure Gaussian states, and by recalling the expressions for the mean photon number in~(\ref{eqn:constraint}) and~(\ref {eqn:constraintpure}), we can write 
\begin{eqnarray}
\fl
F(\varphi|\hat{\rho}_{\Gamma,\bm{d}})
\leq
\int \rmd^{2M}\bm{\xi}\,P_\Gamma(\bm{\xi})
F(\varphi|\hat{\rho}_{\Gamma_0,\bm{d}-\bm{\xi}})
\nonumber\\
\fl\hphantom{F(\varphi|\hat{\rho}_{\Gamma,\bm{d}})}
\le8\|g_\varphi\|^2
\int \rmd^{2M}\bm{\xi}\,P_\Gamma(\bm{\xi})\,\Biggl\{
\frac{1}{2}
\left[
\Tr\!\left(
\Gamma_0-\frac{1}{2}
\right)
+(\bm{d}-\bm{\xi})^2
\right]
\nonumber\\
\fl\hphantom{F(\varphi|\hat{\rho}_{\Gamma,\bm{d}})\le{}}
\qquad\qquad\qquad\qquad\quad\ \ %
{}+
\frac{1}{4}
\left[
\Tr\!\left(
\Gamma_0-\frac{1}{2}
\right)
+(\bm{d}-\bm{\xi})^2
\right]^2
-\frac{1}{4}(\bm{d}-\bm{\xi})^4
\Biggr\}
\nonumber\\
\fl\hphantom{F(\varphi|\hat{\rho}_{\Gamma,\bm{d}})}
=8\|g_\varphi\|^2
\,\Biggl\{
\frac{1}{2}
\left[
\Tr\!\left(
\Gamma-\frac{1}{2}
\right)
+\bm{d}^2
\right]
\nonumber\\
\fl\hphantom{F(\varphi|\hat{\rho}_{\Gamma,\bm{d}})\le{}}
\qquad\quad\ \ %
{}
+
\frac{1}{4}
\left[
\Tr\!\left(
\Gamma-\frac{1}{2}
\right)
+\bm{d}^2
\right]^2
-
\frac{1}{4}
[\Tr(\Gamma-\Gamma_0)+\bm{d}^2]^2
\Biggr\}
\nonumber\\
\fl\hphantom{F(\varphi|\hat{\rho}_{\Gamma,\bm{d}})}
=8\|g_\varphi\|^2\left(
\overline{N}(\overline{N}+1)
-\frac{1}{4}[
\Tr(\Gamma-\Gamma_0) +\bm{d}^2]^2\right)
\nonumber\\
\fl\hphantom{F(\varphi|\hat{\rho}_{\Gamma,\bm{d}})}
\le8\|g_\varphi\|^2\overline{N}(\overline{N}+1).
\vphantom{\frac{1}{2}}
\label{eqn:BoundQFIMixed}
\end{eqnarray}
For the first equality, we have used the moments of the Gaussian distribution $P_\Gamma(\bm{\xi})$ in (\ref{eqn:GaussianWeight}), i.e., $\int\rmd^{2M}\bm{\xi}\,P_\Gamma(\bm{\xi})\bm{\xi}=0$ and $\int\rmd^{2M}\bm{\xi}\,P_\Gamma(\bm{\xi})\bm{\xi}^2=\Tr(\Gamma-\Gamma_0)$.
The inequality~(\ref{eqn:BoundQFIMixed}) proves that~(\ref{eqn:OptQFISeq}) holds irrespective of the purity of the input states.
Furthermore, we notice that 
 the last inequality is saturated if and only if 
\begin{equation}
\Tr(\Gamma-\Gamma_0)=0\quad\mathrm{and}\quad
\bm{d}=0.
\end{equation}
Due to~(\ref{IMPO}), the first condition requires 
\begin{equation}
\Gamma=\Gamma_0,
\end{equation}
implying that the only elements of $\mathcal{G}(M,\overline{N})$ which saturate the bound~(\ref{eqn:OptQFISeq}) are the pure ones, given in~(\ref{eqn:OptPure}).

\section{Measurements}
\label{sec:Measurements}
In this section, we focus on the measurement $\mathcal{P}$ that attains the maximum on the right-hand side of~(\ref{boundonF}) yielding the 
QFI\@.  
As it is the case for the optimal input state $|\psi_\mathrm{opt}\rangle$ 
analyzed in the previous section, we shall see that the optimal POVM also exhibits in general a nontrivial  dependence on the 
target parameter $\varphi$, making it problematic to use it in realistic situations. 
Still, determining the optimal POVM explicitly is a well-defined  problem which deserves to be addressed.

As a starting point of our study, we use the well-known fact that a POVM $\mathcal{P}$ that maximizes the FI of the problem can always be constructed by looking at the set of the eigenprojections of the SLD $L_\varphi$ of the model~\cite{ref:Paris-IJQI}.
We have given an SLD $L_\varphi$ for a generic Gaussian state $\hat{\rho}_\varphi$ in~(\ref{eq:SLDGauss}), which for a pure Gaussian state reduces to~(\ref{eq:SLDpureGauss}).
For our problem, in which the parameter $\varphi$ is embedded in the probe state via a passive linear circuit, it reduces further to~(\ref{eqn:SLDlinear}), which depends on the input state, i.e.~its covariance matrix $\Gamma$ and displacement $\bm{d}$, and the generator $G_\varphi$ of the circuit.
Specifying this expression in the case of the optimal input $\ket{\psi_\mathrm{opt}}$ in~(\ref{eqn:OptPure}), we get 
\begin{equation}
\hat{L}_\varphi
=
i\varepsilon_1
\sinh2r_0
\,
\hat{U}_\varphi
\hat{V}_\varphi
(
\hat{a}_1^2
-\hat{a}_1^{\dag2}
)
\hat{V}_\varphi^\dag
\hat{U}_\varphi^\dag,
\label{eqn:SLDoptGauss}
\end{equation}
with $\hat{a}_1$ being the annihilation operator of the first probing mode, and $\varepsilon_1$ being the largest (in magnitude) eigenvalue of $G_\varphi$, which is put in the first mode after the diagonalization of $G_\varphi$ by $\hat{V}_\varphi$ [see~(\ref{eqn:G})--(\ref{eqn:DescendEpsilon}); recall also the discussion around~(\ref{eqn:UV})].

Notice, however, that SLD is not unique when the density operator $\hat{\rho}_\varphi$ is not of full rank: see~(\ref{eqn:LyapunovEq}).
Indeed, there is a different and simple construction of SLD for a pure state.
Since a pure state $\hat{\rho}_\varphi$ satisfies $\hat{\rho}_\varphi=\hat{\rho}_\varphi^2$, its derivative yields an SLD
$
\hat{L}_\varphi'
=2\,\rmd\hat{\rho}_\varphi/\rmd\varphi
$,
which for our problem with the optimal Gaussian input state $\ket{\psi_\mathrm{opt}}$ reads
\begin{equation}
\hat{L}_\varphi'
=-2i\hat{U}_\varphi
\left[ \hat{G}_\varphi, \ket{\psi_\mathrm{opt}} \bra{\psi_\mathrm{opt}} \right]
\hat{U}_\varphi^\dag,
\end{equation}
where
\begin{equation}
\hat{G}_\varphi=i\hat{U}_\varphi^\dag\frac{\rmd \hat{U}_\varphi}{\rmd \varphi}
\label{eqn:Ghat}
\end{equation}
is the generator of the target circuit $\hat{U}_\varphi$, which is quadratic in the canonical operators $\hat{\bm{a}}$ and $\hat{\bm{a}}^\dag$.
This SLD $\hat{L}_\varphi'$ is of rank 2, and its eigenbasis includes the two orthogonal eigenvectors
\begin{equation}
\ket{\phi_\pm}=\frac{1}{\sqrt{2}}\hat{U}_\varphi\,\Bigl(\ket{\psi_\mathrm{opt}}\mp i\ket{\psi_\mathrm{opt}^\perp}\Bigr)
\label{eqn:OptMeas1}
\end{equation}
belonging to the two nonvanishing eigenvalues $\pm2(\Delta{G}_\varphi)_\mathrm{opt}$, where
\begin{equation}
\ket{\psi_\mathrm{opt}^\perp}
=\frac{1}{(\Delta{G}_\varphi)_\mathrm{opt}}
\left(
\hat{G}_\varphi-\langle\hat{G}_\varphi\rangle_\mathrm{opt}
\right)\ket{\psi_\mathrm{opt}},
\end{equation}
with $\langle\hat{G}_\varphi\rangle_\mathrm{opt}=\bra{\psi_\mathrm{opt}}\hat{G}_\varphi\ket{\psi_\mathrm{opt}}$ and $(\Delta{G}_\varphi)^2_\mathrm{opt}=\langle\hat{G}_\varphi^2\rangle_\mathrm{opt}-\langle\hat{G}_\varphi\rangle_\mathrm{opt}^2$, is a state orthogonal to $\ket{\psi_\mathrm{opt}}$, i.e.~$\bracket{\psi_\mathrm{opt}}{\psi_\mathrm{opt}^\perp}=0$. 
Therefore, the measurement $\mathcal{P}$  with the POVM
\begin{equation}
\Bigl\{
\ket{\phi_+}\bra{\phi_+},\ket{\phi_-}\bra{\phi_-},\openone-\ket{\phi_+}\bra{\phi_+}-\ket{\phi_-}\bra{\phi_-}
\Bigr\}
\label{eqn:POVM1}
\end{equation}
will achieve the upper bound of the QFI in~(\ref{eqn:OptQFISeq}).
This is a generalization of the result given in Ref.~\cite{PhysRevA.73.033821}, from a single-mode phase shift to a  generic multimode passive linear circuit.

\begin{figure}[t]
\centering
\includegraphics[width=0.6\textwidth]{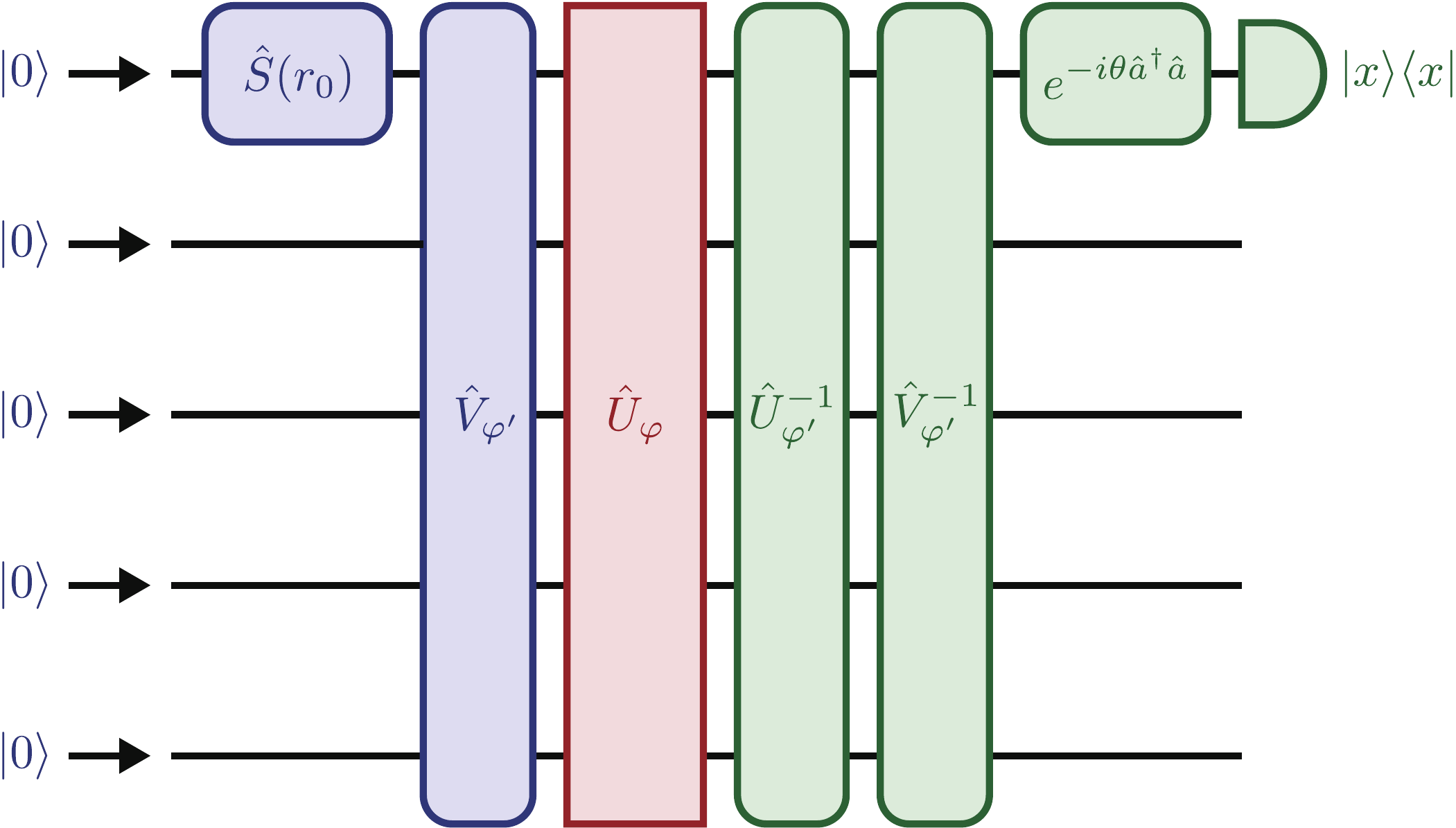}
\caption{An overall circuit to achieve the ultimate precision bound in~(\ref{eqn:OptQFISeq}), including the preparation stage for the optimal input state $\ket{\psi_\mathrm{opt}}$ in~(\ref{eqn:OptPure}) and the measurement stage, where $\ket{x}\bra{x}$ represents the homodyne measurement on the first mode along the quadrature $\hat{x}_1=(\hat{a}_1+\hat{a}_1^\dag)/\sqrt{2}$ and the phase shift $\theta$ is tuned to $\theta=\pm\tan^{-1}e^{2r_0}$.
Note that $\varphi$ of $\hat{U}_\varphi$ is the target parameter to be estimated, which is not under our control, while $\varphi'$ of $\hat{V}_{\varphi'}$, $\hat{U}_{\varphi'}^{-1}$, and $\hat{V}_{\varphi'}^{-1}$ is decided by ourselves. Tuning $\varphi'$ to the true value $\varphi$ provides us with the optimal strategy. The perfect cancellation of $\hat{U}_\varphi$ by $\hat{U}_{\varphi'}^{-1}$ tells us that our guessed value $\varphi'$ perfectly matches the true value $\varphi$, and a small deviation can be sensitively detected by the strategy shown here with $\varphi'=\varphi$.}
\label{fig:MeasCircuit}
\end{figure}
Another example of an optimal POVM can be obtained by considering the scheme depicted in Fig.~\ref{fig:MeasCircuit} [the circuit in Fig.~\ref{fig:MeasCircuit} includes both the preparation stage for the optimal input state $\ket{\psi_\mathrm{opt}}$ in~(\ref{eqn:OptPure}) and the probing stage together with the circuit  $\hat{U}_\varphi$].
The measurement is to first undo the circuit $\hat{U}_\varphi$ as well as the transformation $\hat{V}_\varphi$ applied to prepare the optimal input state $\ket{\psi_\mathrm{opt}}$ in~(\ref{eqn:OptPure}), and then to perform the homodyne measurement on the first mode along the quadrature $\hat{x}_1^{(\theta)}=e^{i\theta\hat{a}_1^\dag\hat{a}_1}\hat{x}_1e^{-i\theta\hat{a}_1^\dag\hat{a}_1}=\hat{x}_1\cos\theta+\hat{y}_1\sin\theta$ with $\theta=\pm\tan^{-1}e^{2r_0}$.
Accordingly, the elements $\{ \hat{\Pi}_x\}$ of the POVM for this measurement can be expressed as 
\begin{equation}
\hat{\Pi}_{x}
=\hat{U}_\varphi\hat{V}_\varphi e^{i\theta\hat{a}_1^\dag\hat{a}_1}\,\Bigl(
\ket{x}\bra{x}
\otimes\openone\otimes\cdots\otimes\openone\Bigr)\,e^{-i\theta\hat{a}_1^\dag\hat{a}_1}\hat{V}_\varphi^\dag\hat{U}_\varphi^\dag,
\label{eqn:HomodynePOVM}
\end{equation}
where $\ket{x}$ is the eigenvector of the quadrature operator $\hat{x}_1$ such that $\hat{x}_1\ket{x}=x\ket{x}$, 
normalized as $\bracket{x}{x'}=\delta(x-x')$.
Indeed, the FI by this POVM $\{\hat{\Pi}_x\}$ for the optimal input $\ket{\psi_\mathrm{opt}}$ in~(\ref{eqn:OptPure}) coincides with the upper bound of the QFI in~(\ref{eqn:OptQFISeq}).
See \ref{app:OptMeas} for the proof.
This is a generalization of the result given in Ref.~\cite{PhysRevA.79.033834}, from a single-mode phase shift to a generic multimode passive linear circuit.

\begin{figure}[b]
\begin{indented}
\item[]
\begin{tabular}{@{}l@{\qquad}l}
(a) MZ interferometer I
&
(b) MZ interferometer II
\\[2truemm]
\makebox(155,72){\includegraphics[width=0.35\textwidth]{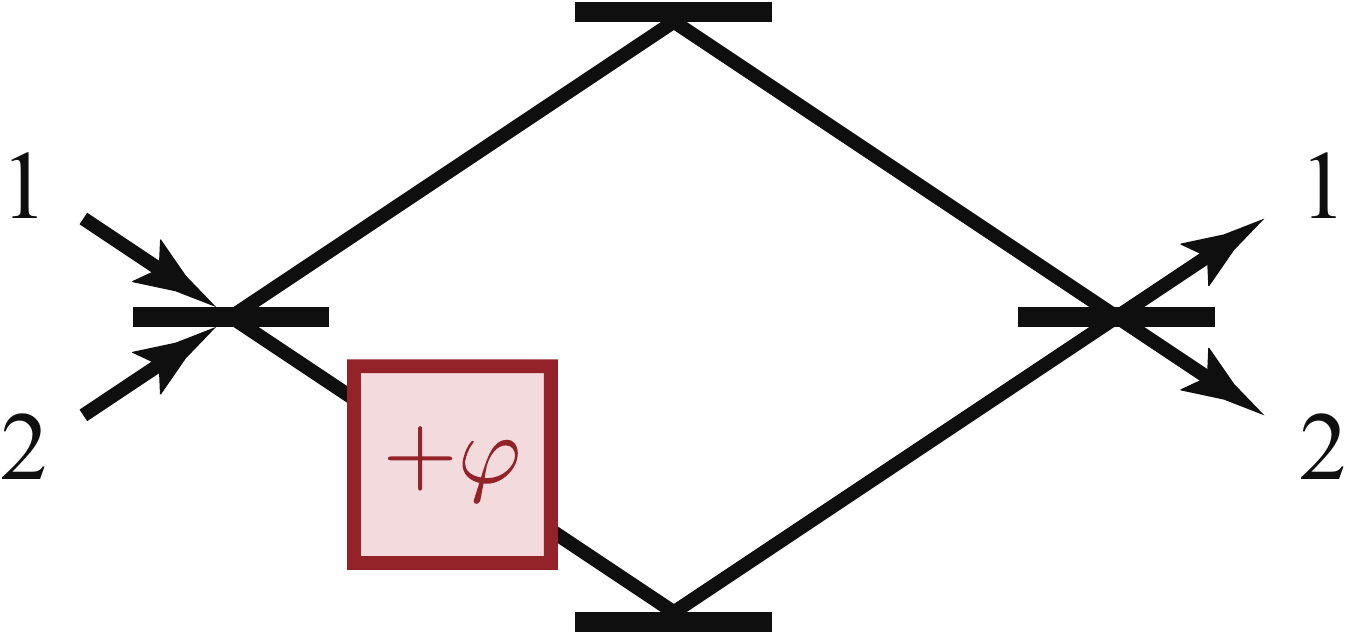}}
&
\makebox(155,72){\includegraphics[width=0.35\textwidth]{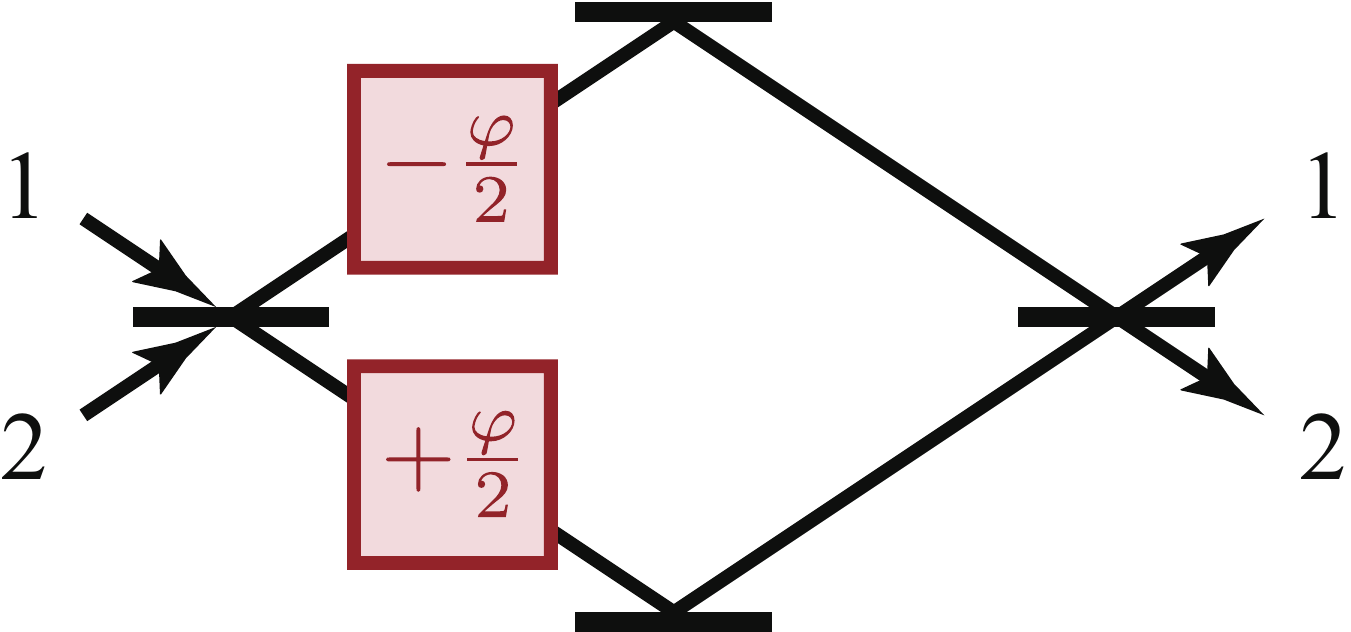}}
\\[4truemm]
(c) two-mode mixing
&
(d) three-mode mixing
\\[2truemm]
\makebox(155,63){\includegraphics[height=0.15\textwidth]{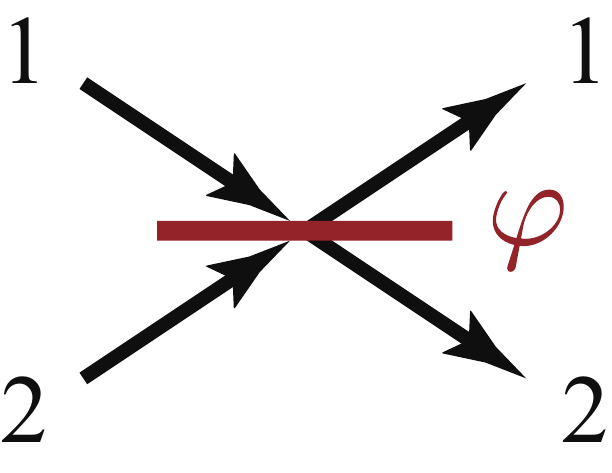}}
&
\makebox(155,63){\includegraphics[height=0.15\textwidth]{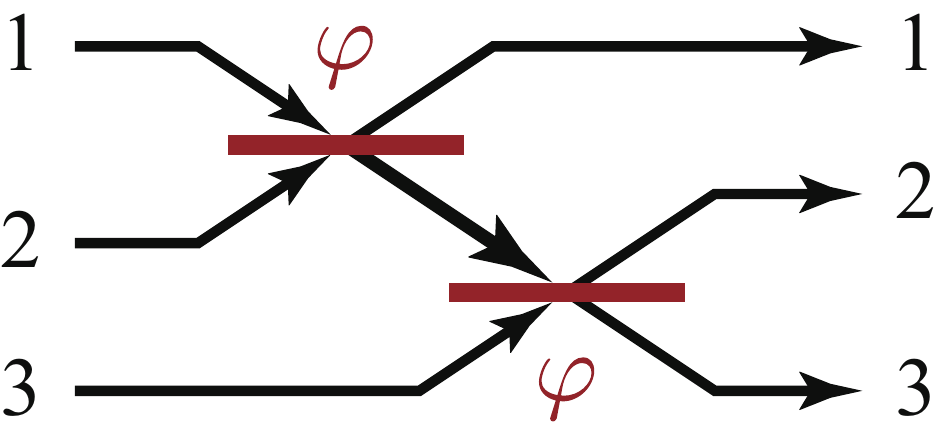}}
\end{tabular}
\end{indented}
\caption{(a)--(b) Two different arrangements of the MZ interferometer. (c) Two-mode mixing circuit. (d) Three-mode mixing circuit.}
\label{fig:MZ}
\end{figure}

\begin{table}[t]
\caption{\label{tab:Examples}%
The optimal Gaussian input state
$
\ket{\psi_\mathrm{opt}}
$ 
and the maximal QFI
$
F(\varphi|\ket{\psi_\mathrm{opt}})
$ 
for the estimation of the parameter $\varphi$ in each of the circuits shown in Fig.~\ref{fig:MZ}.
}
\begin{indented}
\item[]
\begin{tabular}{@{}ccc@{}}
\br
&
$
\ket{\psi_\mathrm{opt}}
$&
$
F(\varphi|\ket{\psi_\mathrm{opt}})
$
\\\hline
MZ interferometer I&
$
\hat{U}_{12}^\dag(\case{\pi}{4})\hat{S}_1(r_0)\ket{0}
$&
$
8\overline{N}(\overline{N}+1)
$
\\
MZ interferometer II&
$
\hat{U}_{12}^\dag(\case{\pi}{4})
\hat{S}_1(r_0)\ket{0}
$&
$
2\overline{N}(\overline{N}+1)
$
\\
two-mode mixing&
$
e^{\frac{\pi i}{4}(\hat{a}_1^\dag\hat{a}_1-\hat{a}_2^\dag\hat{a}_2)}
\hat{U}_{12}^\dag(\case{\pi}{4})\hat{S}_1(r_0)\ket{0}$&
$
8\overline{N}(\overline{N}+1)
$
\\
three-mode mixing&
$
\hat{U}_{12}^\dag(\varphi)
e^{\frac{\pi i}{2}\hat{a}_2^\dag\hat{a}_2}
\hat{U}_{31}(\case{\pi}{4})
\hat{U}_{12}(\case{\pi}{4})
\hat{S}_1(r_0)\ket{0}$&
$
16\overline{N}(\overline{N}+1)
$
\\
\br
\end{tabular}
\end{indented}
\end{table}

\section{Simple Examples}
\label{sec:Examples}
Let us look at a few simple examples, i.e.~the two- and three-mode circuits shown in Fig.~\ref{fig:MZ}, to see in particular how the unitary $\hat{V}_\varphi$ involved in the optimal input Gaussian state~(\ref{eqn:OptPure}) looks like.
The optimal input Gaussian states $
\ket{\psi_\mathrm{opt}}
$ and the maximal QFIs $
F(\varphi|\ket{\psi_\mathrm{opt}})
$ for those examples are summarized in Table \ref{tab:Examples}\@.

\subsection{Mach-Zehnder Interferometer I}
We first consider the Mach-Zehnder (MZ) interferometer in Fig.~\ref{fig:MZ}(a).
Our target is the phase shift $\varphi$ in one of the two arms of the interferometer.
The state of the probe photons going through this MZ interferometer is transformed by the unitary transformation
\begin{equation}
\hat{U}_\varphi
=\hat{U}_{12}^\dag(\case{\pi}{4})
e^{-i\varphi\hat{a}_1^\dag\hat{a}_1}
\hat{U}_{12}(\case{\pi}{4}),
\end{equation}
where
\begin{equation}
\hat{U}_{mn}(\theta)
=e^{\theta(
\hat{a}_n^\dag\hat{a}_m
-\hat{a}_m^\dag\hat{a}_n
)}
\end{equation}
describes a beam splitter for modes $m$ and $n$, which acts on the canonical operators as
\begin{eqnarray}
\left(
\begin{array}{c}
\medskip
\hat{U}_{mn}^\dag(\theta)
\hat{a}_m
\hat{U}_{mn}(\theta)
\\
\hat{U}_{mn}^\dag(\theta)
\hat{a}_n
\hat{U}_{mn}(\theta)
\end{array}
\right)
=
\left(
\begin{array}{cc}
\medskip
\cos\theta & -\sin\theta \\
\sin\theta & \cos\theta
\end{array}
\right)
\left(
\begin{array}{c}
\medskip
\hat{a}_m
\\
\hat{a}_n
\end{array}
\right)
=U_{mn}(\theta)
\left(
\begin{array}{c}
\medskip
\hat{a}_m
\\
\hat{a}_n
\end{array}
\right)
,
\nonumber\\
\end{eqnarray}
with $\theta$ characterizing its transmissivity.
In particular, $\hat{U}_{mn}(\case{\pi}{4})$ describes a balanced beam splitter.
The generator of this two-mode circuit reads
\begin{equation}
\hat{G}_\varphi
=i\hat{U}_\varphi^\dag
\frac{\rmd\hat{U}_\varphi}{\rmd\varphi}
=\hat{U}_{12}^\dag(\case{\pi}{4})
\hat{a}_1^\dag\hat{a}_1
\hat{U}_{12}(\case{\pi}{4}).
\end{equation}
The unitary matrix $U_\varphi$ related to the unitary transformation $\hat{U}_\varphi$ through~(\ref{eqn:UnitaryRotation}) is given by
\begin{equation}
U_\varphi
=
U_{12}^\dag(\case{\pi}{4})
\left(
\begin{array}{cc}
\medskip
e^{-i\varphi} & 0 \\
0 & 1
\end{array}
\right)
U_{12}(\case{\pi}{4}),
\end{equation}
and its generator reads
\begin{equation}
g_\varphi
=iU_\varphi^\dag\frac{\rmd U_\varphi}{\rmd\varphi}
=
U_{12}^\dag(\case{\pi}{4})
\left(
\begin{array}{cc}
\medskip
1 & 0 \\
0 & 0
\end{array}
\right)
U_{12}(\case{\pi}{4}).
\label{eqn:gMZ1}
\end{equation}
We thus have
\begin{equation}
\|g_\varphi\|=1.
\end{equation}
The unitary operator $\hat{V}_\varphi$ corresponding to the unitary matrix diagonalizing $g_\varphi$ in (\ref{eqn:gMZ1}) [compare it with (\ref{eqn:Hamiltonian11})] is
\begin{equation}
\hat{V}_\varphi=\hat{U}_{12}^\dag(\case{\pi}{4}).
\end{equation}
Therefore, the optimal Gaussian input state~(\ref{eqn:OptPure}) for this MZ interferometer is given by
\begin{equation}
\ket{\psi_\mathrm{opt}}
=\hat{U}_{12}^\dag(\case{\pi}{4})\hat{S}_1(r_0)\ket{0},
\label{eqn:PsiOptMZ1}
\end{equation}
with the squeezing parameter $r_0$ given in~(\ref{eqn:r0}).
By this choice, the QFI reaches the upper bound in~(\ref{eqn:OptQFISeq}), yielding
\begin{equation}
F(\varphi|\ket{\psi_\mathrm{opt}})
=8\overline{N}(\overline{N}+1).
\label{eqn:QFIOptMZ1}
\end{equation}

Notice that, in this case, the optimal input state $\ket{\psi_\mathrm{opt}}$ in~(\ref{eqn:PsiOptMZ1}) is \textit{independent} of the target parameter $\varphi$.
Note also that the same expression as~(\ref{eqn:QFIOptMZ1}) is found e.g.~in Refs.~\cite{PhysRevA.73.033821,PhysRevA.79.033834,PhysRevA.88.040102,PhysRevA.91.013808,PhysRevA.94.062313}, but it is found there as the optimal QFI for the estimation of the \textit{single-mode} phase shift with a Gaussian probe. Here, (\ref{eqn:QFIOptMZ1}) is presented as the optimal QFI for the \textit{two-mode} circuit in Fig.~\ref{fig:MZ}(a).

The unitary transformation $\hat{U}_{12}^\dag(\case{\pi}{4})$ in the optimal input state~(\ref{eqn:PsiOptMZ1}) ``unfolds'' the first beam splitter $\hat{U}_{12}(\case{\pi}{4})$ of the MZ interferometer.
Thus, the best strategy effectively consists in sending the single-mode squeezed vacuum $\ket{r_0}=\hat{S}_1(r_0)\ket{0}$ directly to the phase shifter without the first beam splitter $\hat{U}_{12}(\case{\pi}{4})$.
The second beam splitter $\hat{U}_{12}^\dag(\case{\pi}{4})$ of the MZ interferometer is also unfolded by $\hat{U}_{12}(\case{\pi}{4})$ performed in the optimal measurements [see~(\ref{eqn:OptMeas1}) and~(\ref{eqn:HomodynePOVM}), where $\hat{U}_\varphi$ contains $\hat{U}_{12}^\dag(\case{\pi}{4})$, whose Hermitian conjugate $\hat{U}_{12}(\case{\pi}{4})$ in $\hat{U}_\varphi^\dag$ acts on the output probe state first in the measurement process, cancelling the second beam splitter $\hat{U}_{12}^\dag(\case{\pi}{4})$].

\subsection{Mach-Zehnder Interferometer II}
Let us look at the MZ interferometer in the slightly different configuration shown in Fig.~\ref{fig:MZ}(b).
This setup induces the unitary transformation
\begin{equation}
\hat{U}_\varphi
=\hat{U}_{12}^\dag(\case{\pi}{4})
e^{-i\frac{\varphi}{2}(\hat{a}_1^\dag\hat{a}_1
-\hat{a}_2^\dag\hat{a}_2)}
\hat{U}_{12}(\case{\pi}{4}),
\label{eqn:UMZ2}
\end{equation}
and its generator is given by
\begin{equation}
\hat{G}_\varphi
=\frac{1}{2}
\hat{U}_{12}^\dag(\case{\pi}{4})
(
\hat{a}_1^\dag\hat{a}_1
-\hat{a}_2^\dag\hat{a}_2
)
\hat{U}_{12}(\case{\pi}{4}).
\label{eqn:GeneMZ2}
\end{equation}
The unitary matrix $U_\varphi$ corresponding to the unitary operator $\hat{U}_\varphi$ in (\ref{eqn:UMZ2}) is given by
\begin{equation}
U_\varphi
=
U_{12}^\dag(\case{\pi}{4})
\left(
\begin{array}{cc}
\medskip
e^{-i\varphi/2} & 0 \\
0 & e^{i\varphi/2}
\end{array}
\right)
U_{12}(\case{\pi}{4}),
\end{equation}
while the Hermitian matrix $g_\varphi$ corresponding to the generator $\hat{G}_\varphi$ in (\ref{eqn:GeneMZ2}) reads
\begin{equation}
g_\varphi
=
U_{12}^\dag(\case{\pi}{4})
\left(
\begin{array}{cc}
\medskip
1/2 & 0 \\
0 & -1/2
\end{array}
\right)
U_{12}(\case{\pi}{4}).
\end{equation}
We thus have
\begin{equation}
\|g_\varphi\|=\frac{1}{2}.
\end{equation}
The optimal Gaussian input state for this MZ interferometer is the same as the one given in~(\ref{eqn:PsiOptMZ1}), while the maximal QFI achievable by the optimal input state is
\begin{equation}
F(\varphi|\ket{\psi_\mathrm{opt}})
=2\overline{N}(\overline{N}+1).
\label{eqn:QFIOptMZ2}
\end{equation}

This QFI is lower than the previous one in~(\ref{eqn:QFIOptMZ1}) for the other MZ interferometer, even though the relative phases $\varphi$ to be estimated in the two MZ interferometers are the same. This is because injecting all the resources to one of the two arms of the interferometer is optimal if we stick to Gaussian probes, and only one of the two phase shifters in Fig.~\ref{fig:MZ}(b) is probed. 
It would be worth noticing that our estimation problem implicitly assumes the presence of an external phase reference. 
Without the reference beam, the two MZ interferometers in Figs.~\ref{fig:MZ}(a) and (b) are equivalent, since only the relative phase between the two arms matters in such a case.
See the discussion in Ref.~\cite{PhysRevA.85.011801}.

\subsection{Two-Mode Mixing}
Let us look at another two-mode example: the estimation of the parameter $\varphi$ characterizing the transmissivity of the beam splitter represented by the unitary transformation
\begin{equation}
\hat{U}_\varphi
=\hat{U}_{12}(\varphi).
\label{eqn:TwoModeMixing}
\end{equation}
See Fig.~\ref{fig:MZ}(c).
Its generator reads
\begin{equation}
\hat{G}_\varphi
=i(
\hat{a}_2^\dag\hat{a}_1
-\hat{a}_1^\dag\hat{a}_2
),
\end{equation}
which can be rewritten as
\begin{equation}
\hat{G}_\varphi
=
e^{\frac{\pi i}{4}(\hat{a}_1^\dag\hat{a}_1-\hat{a}_2^\dag\hat{a}_2)}
\hat{U}_{12}^\dag(\case{\pi}{4})
(
\hat{a}_1^\dag\hat{a}_1
-
\hat{a}_2^\dag\hat{a}_2
)
\hat{U}_{12}(\case{\pi}{4})
e^{-\frac{\pi i}{4}(\hat{a}_1^\dag\hat{a}_1-\hat{a}_2^\dag\hat{a}_2)}.
\end{equation}
It is unitarily equivalent to the generator $\hat{G}_\varphi
$ in~(\ref{eqn:GeneMZ2}), apart from the numerical proportionality constant $1/2$.
We thus have
\begin{equation}
\|g_\varphi\|=1,
\end{equation}
and the maximal QFI is given by
\begin{equation}
F(\varphi|\ket{\psi_\mathrm{opt}})
=8\overline{N}(\overline{N}+1).
\label{eqn:QFIOptBS}
\end{equation}
This is reached by the input state
\begin{equation}
\ket{\psi_\mathrm{opt}}
=e^{\frac{\pi i}{4}(\hat{a}_1^\dag\hat{a}_1-\hat{a}_2^\dag\hat{a}_2)}
\hat{U}_{12}^\dag(\case{\pi}{4})\hat{S}_1(r_0)\ket{0},
\label{eqn:PsiOptBS}
\end{equation}
with the squeezing parameter $r_0$ given in~(\ref{eqn:r0}).
This optimal state is again independent of the target parameter $\varphi$.

The same estimation problem, i.e.~the estimation of $\varphi$ in the two-mode mixing channel (\ref{eqn:TwoModeMixing}), is studied in Ref.~\cite{PhysRevA.94.062313}, but the maximal QFI (\ref{eqn:QFIOptBS}) and the optimal Gaussian input state (\ref{eqn:PsiOptBS}) are not identified there.

\subsection{Three-Mode Mixing}
Let us also look at a three-mode example.
We consider the circuit shown in Fig.~\ref{fig:MZ}(d), composed of two beam splitters of the same transmissivity characterized by the parameter $\varphi$.
Our problem is to estimate the single parameter $\varphi$ in the three-mode mixing circuit represented by the unitary transformation
\begin{equation}
\hat{U}_\varphi
=	
\hat{U}_{23}(\varphi)
\hat{U}_{12}(\varphi).
\end{equation}
Its generator reads
\begin{eqnarray}
\hat{G}_\varphi
&=&
i
\hat{U}_{12}^\dag(\varphi)
(
\hat{a}_3^\dag\hat{a}_2
-\hat{a}_2^\dag\hat{a}_3
+
\hat{a}_2^\dag\hat{a}_1
-\hat{a}_1^\dag\hat{a}_2
)
\hat{U}_{12}(\varphi)
\nonumber\\
&=&\sqrt{2}\,
\hat{V}_\varphi
(
\hat{a}_1^\dag\hat{a}_1
-
\hat{a}_2^\dag\hat{a}_2
)
\hat{V}_\varphi^\dag,
\end{eqnarray}
with
\begin{equation}
\hat{V}_\varphi
=	
\hat{U}_{12}^\dag(\varphi)
e^{\frac{\pi i}{2}\hat{a}_2^\dag\hat{a}_2}
\hat{U}_{31}(\case{\pi}{4})
\hat{U}_{12}(\case{\pi}{4}).
\end{equation}
We have
\begin{equation}
\|g_\varphi\|=\sqrt{2},
\end{equation}
and the maximal QFI is given by
\begin{equation}
F(\varphi|\ket{\psi_\mathrm{opt}})
=16\overline{N}(\overline{N}+1).
\label{eqn:QFIOptThreeMode}
\end{equation}
This is reached by the input state
\begin{equation}
\ket{\psi_\mathrm{opt}}
=
\hat{U}_{12}^\dag(\varphi)
e^{\frac{\pi i}{2}\hat{a}_2^\dag\hat{a}_2}
\hat{U}_{31}(\case{\pi}{4})
\hat{U}_{12}(\case{\pi}{4})
\hat{S}_1(r_0)\ket{0},
\label{eqn:PsiOptThreeMode}
\end{equation}
with the squeezing parameter $r_0$ given in~(\ref{eqn:r0}).
In this case, the optimal input state depends on the target parameter $\varphi$.

If our guess $\varphi'$ is not precise and does not match the true value $\varphi$, the input state (\ref{eqn:OptPure}) and the measurement, e.g.~(\ref{eqn:POVM1}) or (\ref{eqn:HomodynePOVM}), prepared and performed with the guessed value $\varphi'$ in place of $\varphi$ (see e.g.~the circuit in Fig.~\ref{fig:MeasCircuit}) are not optimal, and the FI for such a nonoptimal probing deviates from the maximal QFI in~(\ref{eqn:OptQFISeq}).
Since we assume that the functional dependence of $\hat{U}_\varphi$ upon $\varphi$ is smooth, the FI is a smooth function of $\varphi'$, and therefore, the deviation of FI from the maximal QFI is only quadratic around the optimal point $\varphi'=\varphi$.
In this sense, the FI is robust to a small error in the guess of $\varphi$.

\section{Sequential Strategy}
\label{sec:Sequential} 
If we are allowed to use multiple (identical) target circuits $\hat{U}_\varphi$ at the same time, we could do better.
Suppose that we are given $L$ identical $M$-mode passive linear circuits $\hat{U}_\varphi$.
A paradigmatic scheme for the quantum metrology is the parallel scheme in Fig.~\ref{fig:SequentialSchemes}(a) with an entangled input $\hat{\rho}$~\cite{ref:QuantumMetrologyVittorio,ref:MetrologyNaturePhoto}.
The result in Sec.~\ref {sec:Optimization} suggests, however, that, if we stick to Gaussian inputs, this parallel setup does not help improve the maximal QFI found in~(\ref{eqn:OptQFISeq}), since the best strategy is to inject all the resources into a single mode of the overall $LM$-mode passive linear circuit in Fig.~\ref{fig:SequentialSchemes}(a): only one of the $L$ circuits is probed with the others irrelevant.
See~(\ref{eqn:OptPure}).
On the other hand, if we are allowed to perform some operations $\{\hat{U}_1,\ldots,\hat{U}_{L-1}\}$ between the target gates $\hat{U}_\varphi$ with ancilla modes introduced as in Fig.~\ref{fig:SequentialSchemes}(c), we can hope to do better.
Let us restrict ourselves to passive linear controls $\{\hat{U}_1,\ldots,\hat{U}_{L-1}\}$, and seek for the optimal strategy with a Gaussian input $\hat{\rho}\in\mathcal{G}(K,\overline{N})$, where $K\ge LM$.

\begin{figure*}
\centering
\makebox(0,90)[tr]{\footnotesize(a)}%
\makebox(69,90)[t]{
\includegraphics[scale=0.26]{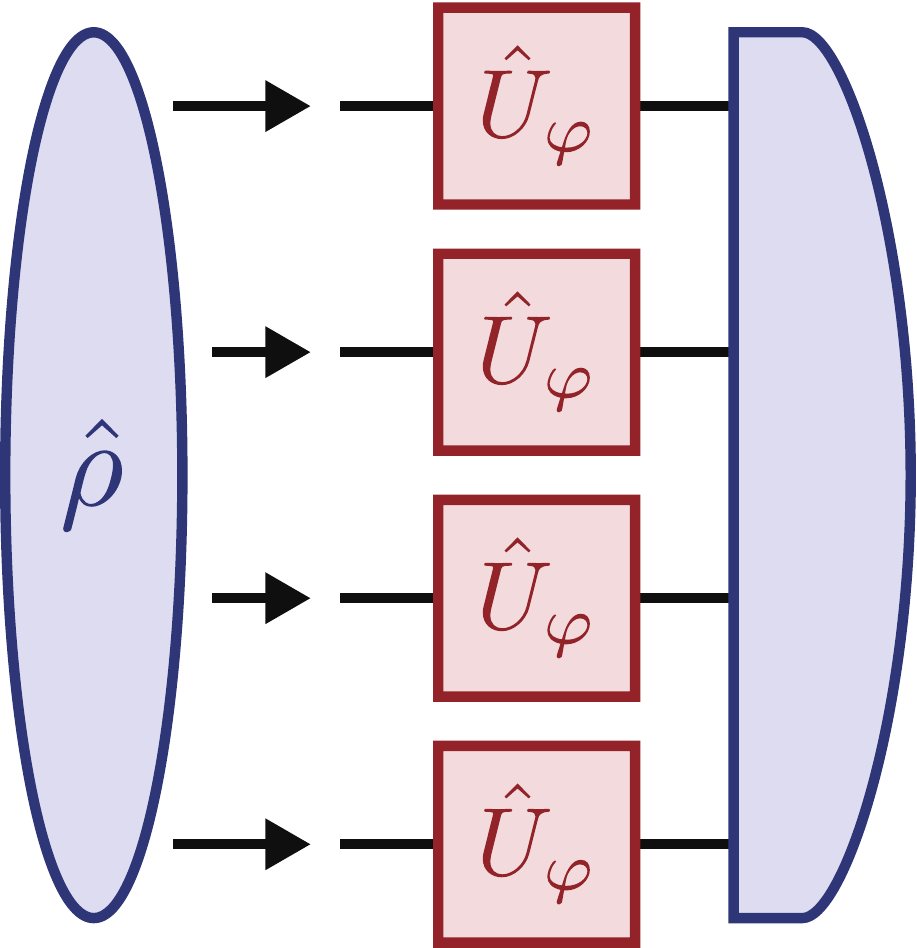}}
\makebox(27,90)[t]{\makebox(7,71){$=$}}%
\makebox(0,90)[tr]{\footnotesize(b)}%
\makebox(162,90)[t]{\includegraphics[scale=0.26]{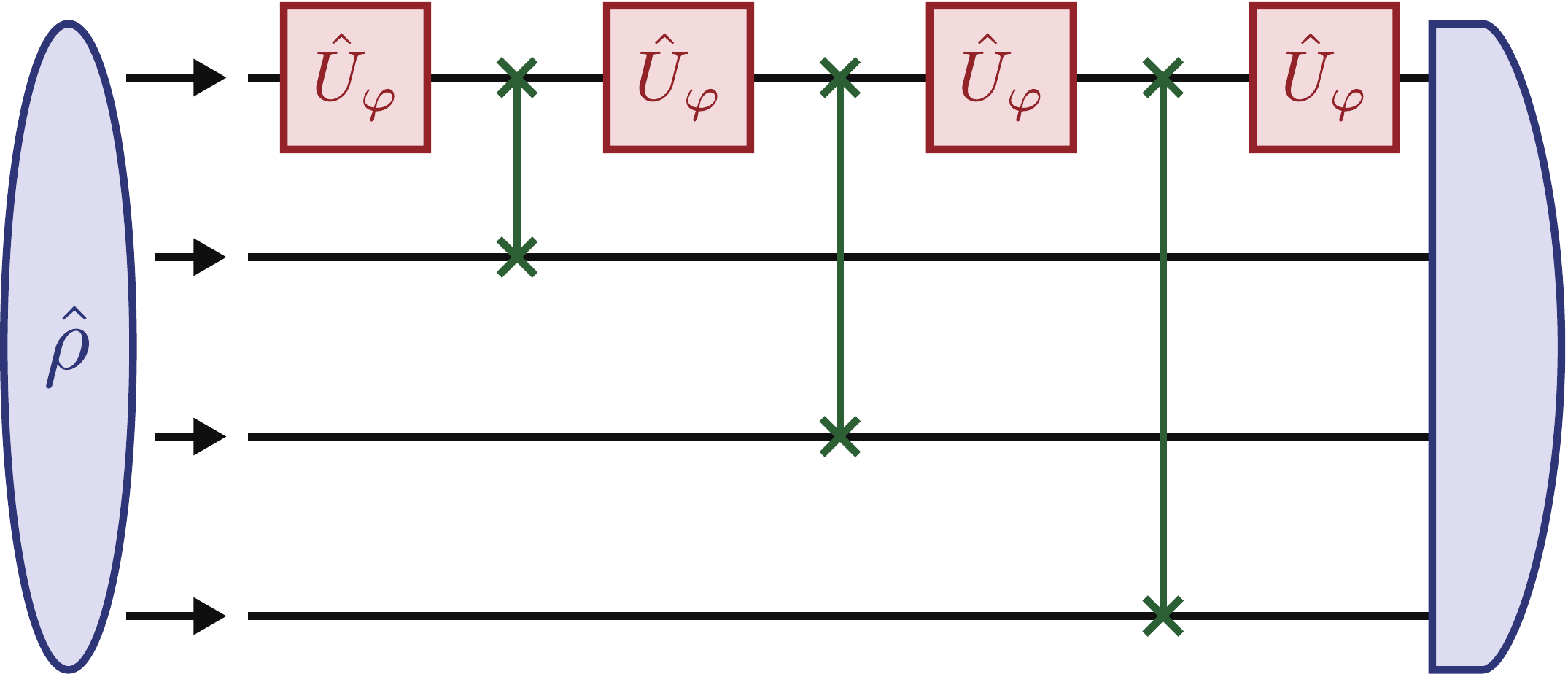}}%
\makebox(27,90)[t]{\makebox(7,71){$\subseteq$}}%
\makebox(0,90)[tr]{\footnotesize(c)}%
\makebox(161,90)[t]{\includegraphics[scale=0.26]{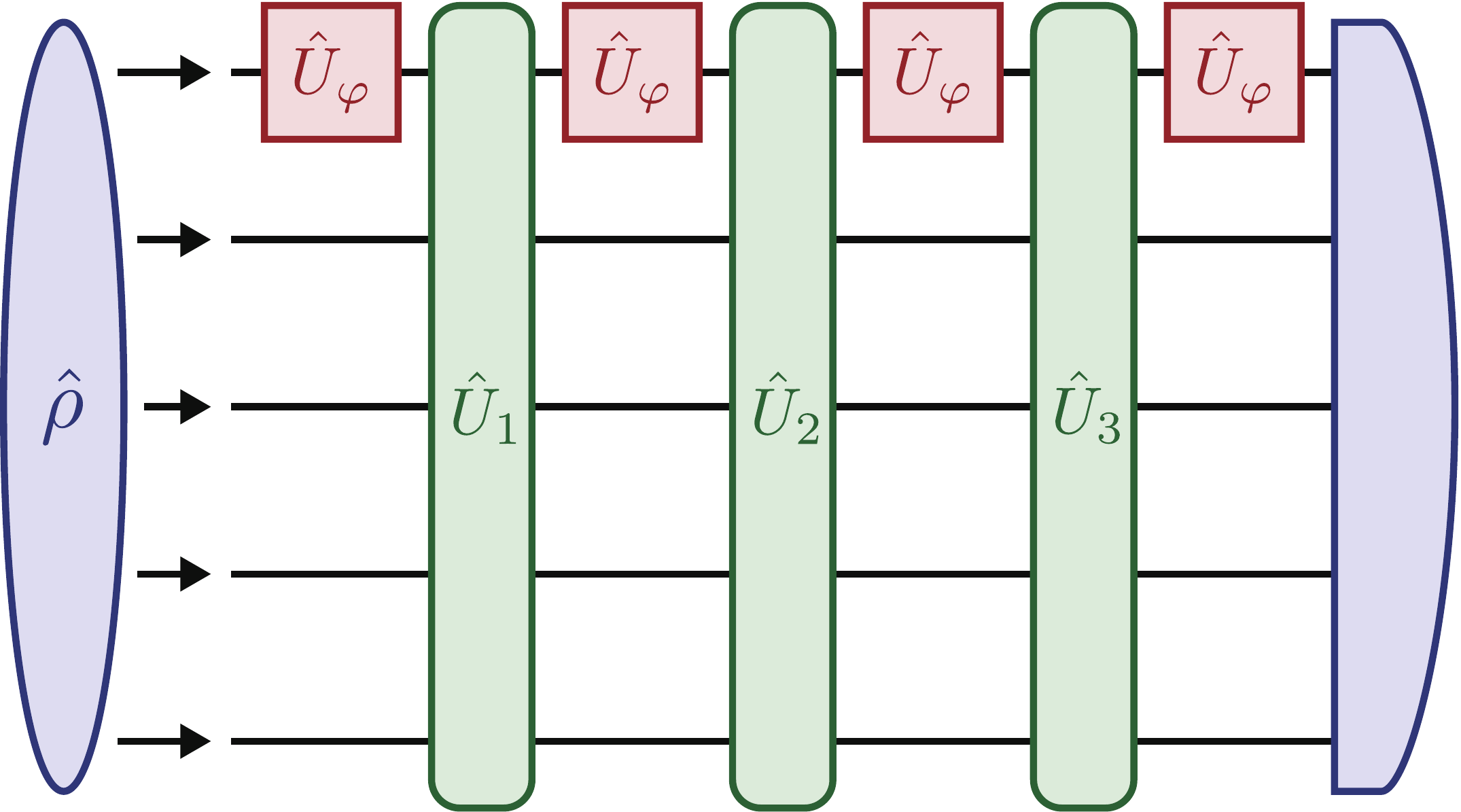}}
\caption{The parallel scheme in (a) is equivalent to the sequential scheme in (b) with the target circuits $\hat{U}_\varphi$ swapped by \textsc{swap} gates, which is a particular case of the sequential scheme in (c) with generic gates $\hat{U}_\ell$ entangling the main probes with additional ancillas.}
\label{fig:SequentialSchemes}
\end{figure*}
The circuit in Fig.~\ref{fig:SequentialSchemes}(c) is described by the unitary
\begin{equation}
\hat{\mathcal{U}}_\varphi
=\hat{U}_\varphi
\hat{U}_{L-1}
\hat{U}_\varphi
\cdots
\hat{U}_2
\hat{U}_\varphi
\hat{U}_1
\hat{U}_\varphi.
\label{eqn:SeqCircuit}
\end{equation}
Note that there are $K\,(\ge LM)$ modes in total in the overall circuit, and the unitary operators $\hat{U}_\varphi$ act only on the first $M$ modes, i.e.~$\hat{U}_\varphi\otimes\openone$.
By abuse of notation, $\hat{U}_\varphi\otimes\openone$ is simply denoted by $\hat{U}_\varphi$ in~(\ref{eqn:SeqCircuit}).
The overall circuit is a $K$-mode passive linear circuit, and the orthogonal matrix $\mathcal{R}_\varphi$ which rotates the quadrature operators $\hat{\bm{z}}$ in  phase space according to the transformation $\hat{\mathcal{U}}_\varphi$ is given by
\begin{equation}
\mathcal{R}_\varphi
=
W^\dag
\left(\begin{array}{@{}c|c@{}}
\,\,\,\mathcal{U}_\varphi & 0\,\, \\
\hline
\,\,\,0 & \mathcal{U}^*_\varphi\,\,
\end{array}\right)
W
\end{equation}
with
\begin{equation}
\mathcal{U}_\varphi
=U_\varphi
U_{L-1}
U_\varphi\cdots
U_2
U_\varphi
U_1
U_\varphi,
\end{equation}
where $U_\varphi$ and $U_\ell$ ($\ell=1,\ldots,L-1$) are $K\times K$ unitary matrices corresponding to $\hat{U}_\varphi\otimes\openone$ and $\hat{U}_\ell$, respectively.
The quantity relevant to the maximal QFI is the spectral norm of the generator of this orthogonal transformation $\mathcal{R}_\varphi$ [see~(\ref{eqn:OptQFISeq})], i.e.~the largest (in magnitude) eigenvalue of
\begin{equation}
\mathcal{G}_\varphi
=i\mathcal{U}_\varphi^\dag\frac{\rmd \mathcal{U}_\varphi}{\rmd\varphi}
=
\sum_{\ell=0}^{L-1}
U_\varphi^\dag
U_1^\dag\cdots
U_\varphi^\dag
U_\ell^\dag
g_\varphi U_\ell
U_\varphi\cdots
U_1U_\varphi,
\end{equation}
where
\begin{equation}
g_\varphi
=iU_\varphi^\dag\frac{\rmd U_\varphi}{\rmd\varphi}.
\label{eqn:Hamiltonian}
\end{equation}
The spectral norm of the generator $\mathcal{G}_\varphi$ is  bounded from above as
\begin{eqnarray}
\|\mathcal{G}_\varphi\|
&
=
\left\|
\sum_{\ell=0}^{L-1}
U_\varphi^\dag
U_1^\dag\cdots
U_\varphi^\dag
U_\ell^\dag
g_\varphi
U_\ell
U_\varphi\cdots
U_1U_\varphi
\right\|
\nonumber\\
&
\le
\sum_{\ell=0}^{L-1}
\|
U_\varphi^\dag
U_1^\dag
\cdots
U_\varphi^\dag
U_\ell^\dag
g_\varphi U_\ell U_\varphi\cdots U_1U_\varphi
\|
=L\|g_\varphi\|.
\vphantom{\sum_{\ell=0}^{L-1}}
\end{eqnarray}
This inequality is saturated if
\begin{equation}
[
g_\varphi,U_\ell U_\varphi
]=0\qquad
(\ell=1,\ldots,L-1).
\end{equation}
A sufficient and general solution is given by 
\begin{equation}
U_\ell=U_\varphi^\dag\qquad
(\ell=1,\ldots,L-1)
\label{eqn:OptControls}
\end{equation}
(cf.~\cite{ref:EstCtrlInversion}).
By this choice, the generator of the overall circuit $\hat{\mathcal{U}}_\varphi$ is reduced to $\mathcal{G}_\varphi=Lg_\varphi$, and the upper bound on the QFI by the sequential strategy with a Gaussian input $\hat{\rho}\in\mathcal{G}(K,\overline{N})$ is given by
\begin{equation}
\mathcal{F}(\varphi|\hat{\rho})
\le8L^2\|g_\varphi\|^2\overline{N}(\overline{N}+1).
\end{equation}
This upper bound is saturated by the input state
\begin{equation}
\ket{\Psi_\mathrm{opt}}
=\ket{\psi_\mathrm{opt}}\otimes\ket{0},
\label{eqn:OptPsiSeq}
\end{equation}
with $\ket{\psi_\mathrm{opt}}$ given in~(\ref{eqn:OptPure}) for the first $M$ modes while vacuum for the rest.

The results in~(\ref{eqn:OptControls}) and~(\ref {eqn:OptPsiSeq}) show that the ancilla modes are not necessary for the optimal strategy.
We note that in general the optimal controls~(\ref{eqn:OptControls}) and the optimal input state~(\ref {eqn:OptPsiSeq}) depend on the target parameter $\varphi$.

\section{Summary}
\label{sec:Conclusion}
We have clarified the universal bound~(\ref{eqn:OptQFISeq}) on the precision of the estimation (QFI) of a parameter embedded in a generic multimode passive (photon-number preserving) linear optical circuit by using Gaussian probes with a given average number of probe photons $\overline{N}$.
We have identified the input Gaussian state~(\ref{eqn:OptPure}) that yields the QFI saturating the bound~(\ref{eqn:OptQFISeq}): it is a single-mode squeezed vacuum in an appropriate basis.
We have also found measurements (POVMs)~(\ref{eqn:POVM1}) and~(\ref{eqn:HomodynePOVM}) by which FI reaches QFI\@. 
The best (sequential) strategy when we are given multiple identical target circuits and are allowed to apply passive linear controls in between with the help of an arbitrary number of ancilla modes has been revealed: no ancilla mode is actually needed for the best strategy.\footnote{There are works in the literature which discuss the
  unnecessity of mode entanglement~\cite
  {DemkowiczDobrzanski2015345,PhysRevA.90.033846,PhysRevA.91.013808,PhysRevA.93.033859,PhysRevA.94.033817,PhysRevA.94.042342,PhysRevA.70.032310,PhysRevLett.120.080501}.
  Note, however, that in those works the probe states are not restricted to
  Gaussian states and in addition just the achievability of the Heisenberg
  scaling (quadratic in $\protect \overline {N}$) is discussed. The chosen
  probe states are not necessarily the optimal ones, even though they actually
  yields QFIs scaling quadratically in $\protect \overline {N}$ (their
  coefficients are not necessarily the optimal). On the other hand, in the
  present work, we look at the optimal state which yields the maximal QFI\@.}

Even though the optimal input state~(\ref{eqn:OptPure}) and the optimal measurements~(\ref{eqn:POVM1}) and~(\ref{eqn:HomodynePOVM}), as well as the optimal controls~(\ref{eqn:OptControls}) in the sequential strategy, depend on the target parameter to be estimated in general and adaptive adjustments of the input, the measurement, and the controls would be required to achieve the precision bound in practice, the above result shows that the bound is sharp and covers various specific setups composed of phase shifters and beam splitters, including the standard MZ interferometer, providing the universal bound that cannot be beaten by any Gaussian inputs and any passive controls.

The present work has focussed on passive linear circuits.
Bounds on more general Gaussian metrology, for general Gaussian channels including amplitude-damping channels and channels involving squeezing, etc., have not been thoroughly understood yet, beyond analyses on specific setups.
Entanglement with ancilla modes would be useful for such generic Gaussian metrology \cite{PhysRevLett.101.253601} and it would be interesting to explore.

\section*{Acknowledgments}
KY thanks Koji Matsuoka for the discussions during his master's thesis study \cite{ref:MatsuokaMS}, in which the bound (\ref{eqn:OptQFISeq}) and the optimal input state (\ref{eqn:OptPure}) were found for some restricted setups.
This work was supported by the Top Global University Project from the Ministry of Education, Culture, Sports, Science and Technology (MEXT), Japan.
KY was supported by the Grant-in-Aid for Scientific Research (C) (No.~18K03470) from the Japan Society for the Promotion of Science (JSPS) and by the Waseda University Grant for Special Research Projects (No.~2018K-262).
PF was supported by INFN through the project ``QUANTUM,'' and by the Italian National Group of Mathematical Physics (GNFM-INdAM).

\appendix
\section{Gaussian States and Operations}\label{sec:Setup}
In order to introduce a proper definition of the Gaussian set $\mathcal{G}(M,\overline{N})$, we find it useful to 
  introduce the quadrature operators $\hat{x}_m$ and $\hat{y}_m$ for each of the $M$ modes, 
\begin{equation}
\left\{\begin{array}{l}
\medskip
\displaystyle
\hat{x}_m=\frac{\hat{a}_m+\hat{a}_m^\dag}{\sqrt{2}}\\
\displaystyle
\hat{y}_m=\frac{\hat{a}_m-\hat{a}_m^\dag }{\sqrt{2}i}
\end{array}\right.
\qquad(m=1,\ldots ,M).
\label{eqn:Quadratures}
\end{equation}
Aligning these operators as a column vector
\begin{equation}
\hat{\bm{z}}=
\left(\begin{array}{@{}c@{}}
\,\,\hat{\bm{x}}\,\, \\
\hline
\,\,\hat{\bm{y}}\,\,
\end{array}\right)
=
\left(\begin{array}{@{}c@{}}
\,\,\hat{x}_1\,\,\\
\vdots \\
\,\,\hat{x}_M\,\,\\
\hline
\,\,\hat{y}_1\,\,\\
\vdots \\
\,\,\hat{y}_M\,\,
\end{array}\right), \label{QUADR} 
\end{equation}
the above relation~(\ref{eqn:Quadratures}) can be expressed as
\begin{equation}
\left(\begin{array}{@{}c@{}}
\,\,\hat{\bm{a}}\,\,\\
\hline
\,\,\hat{\bm{a}}^\dag\,\,
\end{array}\right)
=W\left(\begin{array}{@{}c@{}}
\,\,\hat{\bm{x}}\,\,\\
\hline
\,\,\hat{\bm{y}}\,\,
\end{array}\right)
\end{equation}
with a $2M\times2M$ unitary matrix
\begin{equation}
W=
\frac{1}{\sqrt{2}}
\left(\begin{array}{@{}c|r@{}}
\,\,\,\openone&i\openone\,\,\\
\hline
\,\,\,\openone&-i\openone\,\,
\end{array}\right).
\end{equation}
The canonical commutation relations~(\ref{CANO}) can then be expressed in the compact form 
\begin{equation}
[\hat{z}_m,\hat{z}_n]=i J_{mn} \qquad(m,n=1,\ldots,2M) ,
\end{equation}
with $J$ being the $2M\times 2M$ real matrix 
\begin{equation}
J=
\left(\begin{array}{@{}c|c@{}}
\,\,0&\openone\,\,\\
\hline
\,\,-\openone & 0\,\,
\end{array}\right). \label{DEFJJJ} 
\end{equation}

\subsection{Gaussian States} 
A Gaussian state $\hat{\rho}$ 
is fully characterized by its covariance matrix $\Gamma$ and its displacement $\bm{d}$, defined by
\begin{eqnarray}
\Gamma _{mn}=\frac{1}{2}\langle\{\hat{z}_m,\hat{z}_n\}\rangle
- \langle\hat{z}_m\rangle\langle\hat{z}_n\rangle,\quad
d_m=\langle\hat{z}_m\rangle\quad
(m,n=1,\ldots,2M),
\label{eqn:GammaD}
\end{eqnarray}
where $\langle{}\cdots{}\rangle$ denotes the expectation value on $\hat{\rho}$. 
In particular, its characteristic function reads as
\begin{equation}
\chi(\bm{\eta})
=\langle
e^{i\bm{\eta}\cdot\hat{\bm{z}}}
\rangle
=e^{-\frac{1}{2}\bm{\eta}^T\Gamma\bm{\eta}+i\bm{\eta}\cdot\bm{d}}.
\label{eqn:CharFunc}
\end{equation}
Furthermore, $\hat{\rho}$ is an element of $\mathcal{G}(M,\overline{N})$ when its mean photon number is equal to $\overline{N}$, i.e. 
\begin{equation}
\langle\hat{N}\rangle
=
\frac{1}{2}\left[
\Tr\!\left(
\Gamma-\frac{1}{2}
\right)
+\bm{d}^2
\right]
= \overline{N},
\label{eqn:constraint111}\end{equation} 
where the number operator $\hat{N}$ is defined in~(\ref{eqn:N}).
The covariance matrix $\Gamma$ is real, symmetric, and positive-definite, and hence, according to Williamson's theorem it admits the canonical decomposition~\cite{ref:GaussianQI,ref:ContVarQI-Adesso}
\begin{equation}
\Gamma =RQR'\Sigma R'^{T}QR^T,
\label{eqn:SymplecticDecomp}
\end{equation}
where
\begin{eqnarray}
\Sigma =
\left(\begin{array}{@{}c|c@{}}
\,\,\,\sigma & 0\,\, \\
\hline
\,\,\,0 & \sigma\,\, 
\end{array}\right),
\qquad
&Q=
\left(\begin{array}{@{}c|c@{}}
\,\,\,e^{r} & 0\,\, \\
\hline
\,\,\,0 & e^{-r}\,\,
\end{array}\right),
\label{eqn:SigmaQ}
\\
R
=
W^\dag
\left(\begin{array}{@{}c|c@{}}
\,\,\,U & 0\,\, \\
\hline
\,\,\,0 & U^*\,\,
\end{array}\right)
W,
\qquad
&R'
=
W^\dag
\left(\begin{array}{@{}c|c@{}}
\,\,\,U' & 0\,\, \\
\hline
\,\,\,0 & U'^*\,\,
\end{array}\right)
W,
\label{eqn:R}
\end{eqnarray}
with $M\times M$ diagonal submatrices
\begin{equation}
\sigma =
\left(\begin{array}{ccc}
\sigma_1&&\\[-1truemm]
&\ddots&\\
&&\sigma_M
\end{array}\right),\qquad
r=
\left(\begin{array}{ccc}
r_1&&\\[-1truemm]
&\ddots&\\
&&r_M
\end{array}\right),
\label{eqn:SigmaSqueezing}
\end{equation}
and $M\times M$ unitary submatrices $U$ and $U'$.\footnote{Note that $U^*$ is not the Hermitian conjugate $U^\dag$ of the $M\times M$ matrix $U$, but is obtained by taking the complex conjugate of each matrix element of $U$. In other words, it is $U^*=(U^\dag)^T=(U^T)^\dag$, with $T$ denoting the matrix transpose. This $U^*$ is necessary in the structure of $R$ in (\ref{eqn:R}), for the symplectic character of $R$.} 
The $2M\times2M$ matrices $R$ and $R'$ are real orthogonal matrices, and we have $R^T=R^\dag=R^{-1}$ and $R'^T=R'^\dag=R'^{-1}$.
The parameters $\{\sigma_1,\ldots,\sigma_M\}$ are the symplectic eigenvalues of $\Gamma$, which control the purity $p(\hat{\rho})$ of the Gaussian state $\hat{\rho}$ through~\cite{ref:ContVarQI-Adesso}
\begin{equation}
p(\hat{\rho})
=\Tr\hat{\rho}^2
=\frac{1}{\sqrt{\det(2\Gamma)}}
=\prod_{m=1}^M\frac{1}{2\sigma_m},
\end{equation}
while $\{r_1,\ldots,r_M\}$ are the squeezing parameters.
The symplectic eigenvalues are  bounded from below by $\sigma_m\ge1/2$ ($m=1,\ldots,M$) due to the uncertainty principle~\cite{ref:GaussianQI,ref:ContVarQI-Adesso}.
The Gaussian state $\hat{\rho}$ is pure, $p(\hat{\rho})=1$, if and only if all the symplectic eigenvalues saturate the lower bounds $\sigma_m=1/2$ ($m=1,\ldots,M$).
Without loss of generality, we assume that 
\begin{equation}
\sigma_1\ge \sigma_2\ge\cdots\ge \sigma_M\ge\frac{1}{2},\qquad
r_1\ge r_2\ge\cdots\ge r_M\ge0.
\label{eqn:DescendingSigmaR}
\end{equation}
This reordering can always be done by arranging properly $R$ and $R'$.
The matrices $R$ and $R'$ are symplectic and orthogonal, characterized by the structure~(\ref{eqn:R}) with the unitary matrices $U$ and $U'$.
The squeezing matrix $Q$ is also symplectic.
The symplectic character of these matrices is characterized by
\begin{equation}
R^TJR=J,\qquad
R'^TJR'=J,\qquad
Q^TJQ=J.
\label{eqn:Symplectic}
\end{equation}

\subsection{$M$-Mode Passive Gaussian Unitary}
\label{app:PassiveLinear}
Our target circuit $\hat{U}_\varphi$ is a generic $M$-mode passive Gaussian unitary,  whose action is characterized by the $M\times M$ unitary matrix $U_\varphi$ introduced in~(\ref{eqn:UnitaryRotation}).
In terms of the quadrature operators $\hat{z}_m$, it is rephrased as
\begin{equation}
\hat{U}_\varphi^\dag\hat{z}_m\hat{U}_\varphi=\sum_{n=1}^{2M}(R_\varphi)_{mn}\hat{z}_n
\qquad(m=1,\ldots,2M),
\label{eqn:Rphi}
\end{equation}
or simply written as $\hat{U}_\varphi^\dag\hat{\bm{z}}\hat{U}_\varphi=R_\varphi\hat{\bm{z}}$, with $R_\varphi$ being the $2M \times 2M$ orthogonal matrix defined by
\begin{equation} \label{DEFERRE}
R_\varphi
=
W^\dag
\left(\begin{array}{@{}c|c@{}}
\,\,\,U_\varphi & 0\,\, \\
\hline
\,\,\,0 & U^*_\varphi\,\,
\end{array}\right)
W.
\end{equation}
As is clear from this structure, the matrix $R_\varphi$ is symplectic and orthogonal, and the passive linear transformation $\hat{U}_\varphi$ is a rotation on the phase space.

By construction the transformation  $\hat{U}_\varphi$ maps the set  $\mathcal{G}(M,\overline{N})$  into itself. 
In particular, given $\hat{\rho}\in \mathcal{G}(M,\overline{N})$,  the covariance matrix $\Gamma_\varphi$ and the displacement $\bm{d}_\varphi$ of the associated Gaussian output state $\hat{\rho}_\varphi$ in~(\ref{OUTPUT})  are obtained by rotating the covariance matrix $\Gamma$ and the displacement $\bm{d}$ of the input state $\hat{\rho}$ as
\begin{equation}
\Gamma_\varphi=R_\varphi\Gamma R_\varphi^T
,\qquad
\bm{d}_\varphi=R_\varphi\bm{d}.
\label{eqn:RotGaussApp}
\end{equation}
Note that they still fulfill the constraint~(\ref{eqn:constraint111}) due to the fact that $R_\varphi$ is orthogonal.

An important role on our problem is played by   the generator of the transformation $\hat{U}_\varphi$, i.e.~by the operator 
\begin{equation}
\hat{G}_\varphi=i\hat{U}_\varphi^\dag\frac{\rmd \hat{U}_\varphi}{\rmd \varphi},
\label{eqn:GhatApp}
\end{equation}
whose equivalent on the phase space reads 
\begin{equation}
G_\varphi
=iR_\varphi^{T}\frac{\rmd R_\varphi}{\rmd \varphi}
=
W^\dag
\left(\begin{array}{@{}c|c@{}}
\,\,\,g_\varphi & 0\,\,\\
\hline
\,\,\,0 & -g^*_\varphi\,\,
\end{array}\right)
W
\label{eqn:G}
\end{equation}
with 
\begin{equation}
g_\varphi
=iU_\varphi^\dag\frac{\rmd U_\varphi}{\rmd\varphi}.
\label{eqn:HamiltonianApp}
\end{equation}
This $g_\varphi$ is an $M\times M$ Hermitian matrix, that can be diagonalized by means of an $M\times M$ unitary matrix $V_\varphi$,
\begin{equation}
g_\varphi
=V_\varphi\varepsilon_\varphi V_\varphi^\dag,\qquad
\varepsilon_\varphi
=\left(\begin{array}{ccc}
\varepsilon_1&&\\[-1truemm]
&\ddots&\\
&&\varepsilon_M
\end{array}\right),
\label{eqn:Hamiltonian11}
\end{equation}
where, without loss of generality, the magnitudes of the eigenvalues $\varepsilon_m$ of $g_\varphi$ are ordered in decreasing order
\begin{equation}
|\varepsilon_1|\ge|\varepsilon_2|\ge\cdots\ge|\varepsilon_M|.
\label{eqn:DescendEpsilon}
\end{equation}
The generator $G_\varphi$ is accordingly diagonalized as
\begin{eqnarray}
G_\varphi
=P_\varphi 
W^\dag
\left(\begin{array}{@{}c|c@{}}
\,\,\,\varepsilon_\varphi & 0\,\,\\
\hline
\,\,\,0 & -\varepsilon_\varphi\,\,
\end{array}\right)
W
P_\varphi^T
=
P_\varphi
(\mathcal{E}_\varphi iJ)
P_\varphi^T,
\label{eqn:Hdiagonal}
\end{eqnarray}
where
\begin{equation}
P_\varphi=
W^\dag
\left(\begin{array}{@{}c|c@{}}
\,\,\,V_\varphi & 0\,\, \\
\hline
\,\,\,0 & V^*_\varphi\,\,
\end{array}\right)
W,
\qquad
\mathcal{E}_\varphi
=
\left(\begin{array}{@{}c|c@{}}
\,\,\,\varepsilon_\varphi & 0\,\,\\
\hline
\,\,\,0 & \varepsilon_\varphi\,\,
\end{array}\right).
\label{eqn:P}
\end{equation}

\section{Derivation of the Expression (\ref{eqn:F1}) for $F^{(1)}(\varphi|\hat{\rho})$}
\label{app:3.17}
Here, we show the derivation of the expression for $F^{(1)}(\varphi|\hat{\rho})$ in (\ref{eqn:F1}).
Notice first that $R$ in (\ref{eqn:R}) is a real matrix, and hence, 
\begin{equation}
R^T=R^\dag=R^{-1}
=W^\dag
\left(\begin{array}{@{}c|c@{}}
\,\,\,U^\dag & 0\,\, \\
\hline
\,\,\,0 & U^T\,\,
\end{array}\right)
W.
\label{eqn:Rdag}
\end{equation}
Inserting this into (\ref{eqn:GammaPure}), the covariance matrix $\Gamma$ of a pure Gaussian state is expressed as
\begin{eqnarray}
\Gamma
&=&
\frac{1}{2}
W^\dag
\left(\begin{array}{@{}c|c@{}}
\,\,\,U & 0\,\, \\
\hline
\,\,\,0 & U^*\,\,
\end{array}\right)
W
\left(\begin{array}{@{}c|c@{}}
\,\,\,e^{2r} & 0\,\, \\
\hline
\,\,\,0 & e^{-2r}\,\,
\end{array}\right)
W^\dag
\left(\begin{array}{@{}c|c@{}}
\,\,\,U^\dag & 0\,\, \\
\hline
\,\,\,0 & U^T\,\,
\end{array}\right)
W
\nonumber\\
&=&
\frac{1}{2}
W^\dag
\left(\begin{array}{@{}c|c@{}}
\,\,\,U\cosh2r\,U^\dag & U\sinh2r\,U^T\,\, \\
\hline
\,\,\,U^*\sinh2r\,U^\dag & U^*\cosh2r\,U^T\,\,
\end{array}\right)
W
,
\end{eqnarray}
and we have
\begin{eqnarray}
\Gamma^{-1}
&=&
2
W^\dag
\left(\begin{array}{@{}c|c@{}}
\,\,\,U\cosh2r\,U^\dag & -U\sinh2r\,U^T\,\, \\
\hline
\,\,\,-U^*\sinh2r\,U^\dag & U^*\cosh2r\,U^T\,\,
\end{array}\right)
W.
\end{eqnarray}
Then, inserting these and (\ref{eqn:G}) into the first line of (\ref{eqn:F1}), we get
\begin{eqnarray}
F^{(1)}(\varphi|\hat{\rho})
&=&\frac{1}{2}\Tr(
G_\varphi\Gamma^{-1}G_\varphi\Gamma-G_\varphi^2
)
\nonumber\\
&=&\frac{1}{2}\Tr[
(U^\dag g_\varphi U\cosh2r)^2
]
+\frac{1}{2}\Tr[
(\cosh2r\,U^Tg^*_\varphi U^*)^2
]
\nonumber\\
&&{}
+\Tr(
U^\dag g_\varphi U\sinh2r\,U^T g^*_\varphi U^*\sinh2r
)
-\frac{1}{2}
\Tr(g_\varphi^2)
-\frac{1}{2}
\Tr(g_\varphi^2)^*.
\nonumber\\
\end{eqnarray}
Since $g_\varphi$ is Hermitian and hence $g_\varphi^*=g_\varphi^T$, this is simplified to the expression in (\ref{eqn:F1}), noting $\Tr(A^T)=\Tr A$ for any matrix $A$.

\section{Some Useful Inequalities}
\label{app:Ineq}
\begin{lemma}
For Hermitian matrices $A$ and $B$,
\begin{equation}
\Tr[(AB)^2]\le\Tr(A^2B^2).
\label{eqn:Ineq1}
\end{equation}
The equality holds if and only if $[A,B]=0$.
\end{lemma}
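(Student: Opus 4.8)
The plan is to deduce the inequality from the positive semi-definiteness of a matrix built out of the commutator $[A,B]=AB-BA$. First I would note that, since $A$ and $B$ are Hermitian, $(AB)^\dag=BA$, so the commutator is anti-Hermitian, $[A,B]^\dag=-[A,B]$. Hence $[A,B]^\dag[A,B]=-[A,B]^2$ is positive semi-definite, and in particular $\Tr\!\left([A,B]^2\right)\le0$.

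The second step is a routine trace computation. Expanding $[A,B]^2=(AB-BA)^2=(AB)^2+(BA)^2-AB^2A-BA^2B$ and using cyclicity of the trace, one has $\Tr\!\left((BA)^2\right)=\Tr\!\left((AB)^2\right)$ and $\Tr(AB^2A)=\Tr(BA^2B)=\Tr(A^2B^2)$, so that $\Tr\!\left([A,B]^2\right)=2\Tr\!\left((AB)^2\right)-2\Tr(A^2B^2)$. Combining this identity with the bound $\Tr\!\left([A,B]^2\right)\le0$ from the first step immediately gives $\Tr\!\left((AB)^2\right)\le\Tr(A^2B^2)$.

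For the equality case, I would argue that $\Tr\!\left((AB)^2\right)=\Tr(A^2B^2)$ is equivalent to $\Tr\!\left([A,B]^\dag[A,B]\right)=0$; since $[A,B]^\dag[A,B]$ is positive semi-definite, a vanishing trace forces $[A,B]^\dag[A,B]=0$ and hence $[A,B]=0$, while the converse implication is trivial. The only point requiring a little care is bookkeeping the signs produced by the anti-Hermiticity of the commutator, so I do not expect any genuine obstacle here: the whole argument is a short application of the fact that $\Tr(X^\dag X)\ge0$, with equality if and only if $X=0$.
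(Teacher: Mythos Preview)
Your proof is correct and essentially identical to the paper's. The paper phrases it as ``$i(AB-BA)$ is Hermitian, hence $0\le\Tr\{[i(AB-BA)]^2\}=-2\Tr[(AB)^2]+2\Tr(A^2B^2)$,'' which is the same computation as yours with $X=[A,B]$ anti-Hermitian and $\Tr(X^\dag X)\ge0$; your treatment of the equality case is slightly more explicit but otherwise the arguments coincide.
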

\begin{proof} 
Since $i(AB-BA)$ is Hermitian,
\begin{eqnarray}
0
&\le\Tr\{[i(AB-BA)]^2\}
\nonumber\\
&
=-2\Tr[(AB)^2]+2\Tr(A^2B^2).
\end{eqnarray}
Therefore, the inequality~(\ref{eqn:Ineq1}) follows.
The equality holds if and only if $AB-BA=0$.
\end{proof}

\begin{lemma}
For Hermitian matrices $A$ and $B$,
\begin{equation}
\Tr(A^TB^TAB)\le\Tr(A^2 B^2).
\label{eqn:Ineq2}
\end{equation}
The equality holds if and only if $AB=(AB)^T$.
\end{lemma}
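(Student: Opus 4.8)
The plan is to adapt the strategy used in the proof of Lemma~1. There, the relevant object $i(AB-BA)$ is already Hermitian, so one squares it and reads off the sign from $\Tr\{[\,\cdot\,]^2\}\ge0$. Here the analogous object $Z:=AB-(AB)^T=AB-B^TA^T$ is \emph{not} Hermitian in general, so instead I would invoke the elementary positivity $\Tr(Z^\dagger Z)\ge0$, valid for any complex matrix $Z$, with $\Tr(Z^\dagger Z)=0$ if and only if $Z=0$.

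The next step is to expand $\Tr(Z^\dagger Z)$. Since $A$ and $B$ are Hermitian, $\overline{A}=A^T$ and $\overline{B}=B^T$, so that $(AB)^\dagger=BA$ and, using $(M^T)^\dagger=M^*$ together with $(AB)^*=\overline{A}\,\overline{B}=A^TB^T$, also $((AB)^T)^\dagger=A^TB^T$; hence $Z^\dagger=BA-A^TB^T$. Multiplying out yields
\begin{equation}
\Tr(Z^\dagger Z)=\Tr(BA\,AB)-\Tr(BA\,B^TA^T)-\Tr(A^TB^T\,AB)+\Tr(A^TB^T\,B^TA^T).
\end{equation}
The first and the last term each collapse to $\Tr(A^2B^2)$ by cyclicity of the trace together with $(A^T)^2=(A^2)^T$, $(B^T)^2=(B^2)^T$ and $\Tr(X)=\Tr(X^T)$. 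The third term is already $\Tr(A^TB^TAB)$, and the second one equals it too: transposing $BAB^TA^T$ (which does not change the trace) gives $ABA^TB^T$, and a cyclic rotation brings this to $A^TB^TAB$. Collecting the four contributions, $\Tr(Z^\dagger Z)=2\Tr(A^2B^2)-2\Tr(A^TB^TAB)\ge0$, which is the asserted inequality; equality holds iff $\Tr(Z^\dagger Z)=0$, i.e.~iff $Z=0$, i.e.~iff $AB=(AB)^T$.

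The only real obstacle is the bookkeeping with the three distinct operations acting on $AB$: $(AB)^\dagger=BA$, $(AB)^T=B^TA^T$, and $(AB)^*=A^TB^T$; one must be sure that each of the two cross terms reduces to $\Tr(A^TB^TAB)$ and not to some other contraction such as $\Tr(A^TB^TBA)$. Should this feel error-prone, an equally short but more transparent route is to set $C=AB$ and observe that the claimed inequality is exactly $\Tr(\overline{C}\,C)\le\Tr(C^\dagger C)$ (because $\overline{C}=A^TB^T$ and $\Tr(C^\dagger C)=\Tr(BA\,AB)=\Tr(A^2B^2)$); passing to matrix entries and symmetrising in the two indices then gives $\Tr(C^\dagger C)-\Tr(\overline{C}\,C)=\frac{1}{2}\sum_{i,j}|C_{ij}-C_{ji}|^2$, which makes both the inequality and the saturation condition $C=C^T$ self-evident.
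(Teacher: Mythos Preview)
Your proof is correct and is essentially identical to the paper's: the paper also writes $0\le\Tr\{[AB-(AB)^T]^\dag[AB-(AB)^T]\}=2\Tr(A^2B^2)-2\Tr(A^TB^TAB)$, just without spelling out the intermediate bookkeeping you carefully perform. Your alternative entrywise formulation $\Tr(C^\dagger C)-\Tr(\overline{C}\,C)=\tfrac{1}{2}\sum_{i,j}|C_{ij}-C_{ji}|^2$ with $C=AB$ is a nice addition that makes the equality case more transparent, but the underlying argument is the same.
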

\begin{proof}
By noting the Hermitianity of $A$ and $B$,
\begin{eqnarray}
0&\le\Tr\{[AB-(AB)^T]^\dag[AB-(AB)^T]\}
\nonumber\\
&
=2\Tr(A^2B^2)-2\Tr(A^TB^TAB).
\end{eqnarray}
Therefore, the inequality~(\ref{eqn:Ineq2}) follows.
The equality holds if and only if $AB-(AB)^T=0$.
\end{proof}

\begin{lemma}
For Hermitian and positive semi-definite matrices $A$ and $B$,
\begin{equation}
\Tr(AB)\le\|A\|\Tr B,
\end{equation}
where $\|A\|$ is the spectral norm of $A$, given by its largest eigenvalue.
The equality holds if and only if the support of $B$ (i.e.\ the orthogonal complement of its kernel) is contained in the eigenspace of $A$ belonging to its largest eigenvalue.
\end{lemma}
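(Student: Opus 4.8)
The plan is to reduce the bound to the elementary operator inequality $A\le\|A\|\openone$, which holds for any Hermitian $A$ because every eigenvalue of $A$ is at most $\|A\|$ (and for a positive semi-definite $A$ the spectral norm is exactly the largest eigenvalue). First I would diagonalize $A=\sum_k\lambda_k\ket{k}\bra{k}$ in an orthonormal eigenbasis $\{\ket{k}\}$, with $0\le\lambda_k\le\|A\|$, and evaluate the trace in that basis:
\begin{equation}
\Tr(AB)=\sum_k\lambda_k\bra{k}B\ket{k}
\le\|A\|\sum_k\bra{k}B\ket{k}
=\|A\|\Tr B,
\end{equation}
where the middle inequality is legitimate term by term since $\bra{k}B\ket{k}\ge0$ by the positive semi-definiteness of $B$. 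This already proves the inequality.

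For the equality case, which I expect to be the only mildly delicate point, I would recast the same estimate using the positive square root $B^{1/2}$ and the cyclicity of the trace:
\begin{equation}
\|A\|\Tr B-\Tr(AB)
=\Tr\!\bigl[B^{1/2}(\|A\|\openone-A)B^{1/2}\bigr]
=\Tr(C^\dag C)\ge0,
\end{equation}
with $C=(\|A\|\openone-A)^{1/2}B^{1/2}$, the first equality being valid because $\|A\|\openone-A\ge0$ admits a Hermitian square root. Hence equality holds if and only if $C=0$, i.e.\ $(\|A\|\openone-A)^{1/2}B^{1/2}=0$, equivalently $(\|A\|\openone-A)B^{1/2}=0$ (the kernels of a positive semi-definite matrix and of its square root coincide). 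This last identity states precisely that $\mathrm{range}(B^{1/2})$, which is the support of $B$, is contained in $\ker(\|A\|\openone-A)$, and that kernel is the eigenspace of $A$ belonging to its largest eigenvalue $\|A\|$; the converse implication is immediate from the same display. This treats degeneracy of the top eigenvalue of $A$ and possible rank-deficiency of $B$ uniformly, so no extra case analysis is needed and the proof is complete.
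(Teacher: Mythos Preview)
Your proof is correct, and for the inequality itself it is essentially identical to the paper's: both diagonalize $A$ in an orthonormal eigenbasis and bound $\Tr(AB)=\sum_k\lambda_k\bra{k}B\ket{k}$ term by term using $\bra{k}B\ket{k}\ge0$.

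The only genuine difference is in the treatment of the equality case. The paper stays with the diagonal sum and observes that equality forces $(\lambda_{\max}-\lambda_k)\bra{k}B\ket{k}=0$ for every $k$, hence $\bra{k}B\ket{k}=0$ whenever $\lambda_k<\lambda_{\max}$; it then invokes $B\ge0$ to upgrade this to $B\ket{k}=0$, so those eigenvectors lie in $\ker B$ and the support of $B$ sits in the top eigenspace of $A$. Your rewriting $\|A\|\Tr B-\Tr(AB)=\Tr(C^\dag C)$ with $C=(\|A\|\openone-A)^{1/2}B^{1/2}$ reaches the same conclusion via $C=0$, which directly says $\mathrm{range}(B^{1/2})\subseteq\ker(\|A\|\openone-A)$. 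This packages the positivity step more compactly and avoids the separate passage from $\bra{k}B\ket{k}=0$ to $B\ket{k}=0$; the paper's version, on the other hand, keeps everything in the single eigenbasis already introduced and needs no square roots. Both are standard and equally short.
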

\begin{proof}
Consider the spectral decomposition of the Hermitian and positive semi-definite matrix $A$,
\begin{equation}
A=\sum_n\lambda_n\bm{v}_n\bm{v}_n^\dag,\qquad
\lambda_n\ge0.
\end{equation}
Then, by noting the fact that $\bm{u}^\dag B\bm{u}\ge0$ for any vector $\bm{u}$,
\begin{equation}
\Tr(AB)
=\sum_n\lambda_n\bm{v}_n^\dag B\bm{v}_n
\le\lambda_\mathrm{max}\sum_n\bm{v}_n^\dag B\bm{v}_n
=\lambda_\mathrm{max}\Tr B,
\end{equation}
which proves the statement.
The equality holds if and only if $(\lambda_\mathrm{max}-\lambda_n)\bm{v}_n^\dag B\bm{v}_n=0$ for all $n$, i.e.~if and only if $\bm{v}_k^\dag B\bm{v}_k=0$ for all $\bm{v}_k$ belonging to the eigenvalues $\lambda_k$ of $A$ strictly smaller than $\lambda_\mathrm{max}$.
This, in turns, is equivalent to the condition that the support of $B$ is  in the eigenspace of $A$ belonging to its largest eigenvalue $\lambda_\mathrm{max}$.
\end{proof}

\begin{lemma}
For Hermitian and positive semi-definite matrix $A$,
\begin{equation}
\Tr(A^2)\le(\Tr A)^2.
\end{equation}
The equality holds if and only if only one of the eigenvalues of $A$ is nonvanishing and it is not degenerate.
\end{lemma}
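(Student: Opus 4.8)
The plan is to diagonalize $A$ and reduce the matrix inequality to an elementary fact about nonnegative numbers. Since $A$ is Hermitian and positive semi-definite, let $\lambda_1,\dots,\lambda_N\ge0$ denote its eigenvalues, counted with multiplicity. Then $\Tr A=\sum_n\lambda_n$ and $\Tr(A^2)=\sum_n\lambda_n^2$, so the assertion to be proved is simply
\begin{equation}
\sum_n\lambda_n^2\le\Bigl(\sum_n\lambda_n\Bigr)^{\!2}.
\end{equation}

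Next I would expand the right-hand side as $\bigl(\sum_n\lambda_n\bigr)^2=\sum_n\lambda_n^2+2\sum_{m<n}\lambda_m\lambda_n$. Because every $\lambda_n\ge0$, each cross term $\lambda_m\lambda_n$ is nonnegative, whence $2\sum_{m<n}\lambda_m\lambda_n\ge0$ and the inequality follows immediately. For the equality clause, equality holds if and only if $\sum_{m<n}\lambda_m\lambda_n=0$, i.e.\ $\lambda_m\lambda_n=0$ for every pair $m\ne n$; this is equivalent to saying that at most one of the eigenvalues is nonzero. Discarding the trivial case $A=0$, this means $A$ has a single nonvanishing eigenvalue, which (being unaccompanied by any other nonzero eigenvalue, multiplicities included) is in particular non-degenerate, exactly as stated.

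There is essentially no obstacle here: the argument is a one-line application of nonnegativity of the off-diagonal products, and the only point requiring a little care is bookkeeping of multiplicities in the equality statement so that the word ``non-degenerate'' is correctly recovered. As an alternative route, one can instead chain the preceding lemma (with $B=A$) to obtain $\Tr(A^2)\le\|A\|\,\Tr A$, and then use $\|A\|\le\Tr A$ for a positive semi-definite $A$; combining the two equality conditions---support of $A$ contained in the top eigenspace, and all remaining eigenvalues vanishing---gives the same characterization.
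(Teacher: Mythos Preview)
Your proof is correct and follows essentially the same approach as the paper: diagonalize, reduce to $\sum_n\lambda_n^2\le(\sum_n\lambda_n)^2$ via nonnegativity of the cross terms $\lambda_m\lambda_n$, and read off the equality condition $\lambda_m\lambda_n=0$ for all $m\ne n$. The paper's argument is just a terser version of yours (it does not write out the expansion of the square explicitly), and your alternative route via the preceding lemma is a nice extra observation not present in the paper.
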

\begin{proof}
The eigenvalues $\lambda_n$ of $A$ are positive semi-definite, $\lambda_n\ge0$.
Then,
\begin{equation}
\Tr(A^2)
=\sum_n\lambda_n^2
\le
\left(\sum_n\lambda_n\right)^2
=(\Tr A)^2.
\end{equation}
The equality holds if and only if $\lambda_m\lambda_n=0$ for all pairs with $m\neq n$, namely, only one of the eigenvalues $\lambda_n$ is nonvanishing and it is not degenerate.
\end{proof}

\section{Proof of the Optimality of the Measurement in Fig.~\ref{fig:MeasCircuit}}
\label{app:OptMeas}
Here, we show that the FI by the optimal input $\ket{\psi_\mathrm{opt}}$ in~(\ref{eqn:OptPure}) and the POVM $\{\hat{\Pi}_x\}$ in~(\ref{eqn:HomodynePOVM}) (the circuit in Fig.~\ref{fig:MeasCircuit}) coincides with the upper bound of the QFI in~(\ref{eqn:OptQFISeq}).
To see this, observe that the probability of measuring the value $x$ by this measurement 
in the output state of the circuit in Fig.~\ref{fig:MeasCircuit} is given by
\begin{eqnarray}
p(x|\varphi)
=
\bra{0}
\hat{S}_1^\dag(r_0)
\hat{V}_{\varphi'}^\dag
\hat{U}_\varphi^\dag
\hat{U}_{\varphi'}
\hat{V}_{\varphi'}
e^{i\theta\hat{a}_1^\dag\hat{a}_1}
\ket{x}
\bras{x}{1}
e^{-i\theta\hat{a}_1^\dag\hat{a}_1}
\hat{V}_{\varphi'}^\dag
\hat{U}_{\varphi'}^\dag
\hat{U}_\varphi
\hat{V}_{\varphi'}
\hat{S}_1(r_0)
\ket{0},
\nonumber\\
\end{eqnarray}
where the parameter $\varphi'$ used in the input state and in the measurement will be set $\varphi'=\varphi$ later.
It is the marginal of the Wigner function of the output state along the quadrature $\hat{x}_1=(\hat{a}_1+\hat{a}_1^\dag)/\sqrt{2}$.
Its characteristic function $\chi(\xi|\varphi)$ is computed to be
\begin{eqnarray}
\chi(\xi|\varphi)
&=&
\int_{-\infty}^\infty
\rmd x\,
p(x|\varphi)e^{-i\xi x}
\nonumber\\
&=&
\bra{0}
\hat{S}_1^\dag(r_0)
\hat{V}_{\varphi'}^\dag
\hat{U}_\varphi^\dag
\hat{U}_{\varphi'}
\hat{V}_{\varphi'}
e^{i\theta\hat{a}_1^\dag\hat{a}_1}
e^{-i\xi\hat{x}_1}
e^{-i\theta\hat{a}_1^\dag\hat{a}_1}
\hat{V}_{\varphi'}^\dag
\hat{U}_{\varphi'}^\dag
\hat{U}_\varphi
\hat{V}_{\varphi'}
\hat{S}_1(r_0)
\ket{0}
\vphantom{\int_{-\infty}^\infty}
\nonumber\\
&=&
e^{
-\frac{1}{2}
(\Delta x)_\theta^2
\xi^2
},
\end{eqnarray}
where 
\begin{eqnarray}
(\Delta x)_\theta^2
&=
\frac{1}{2}
\,\Bigl(
1
&
+
|
(
V_{\varphi'}^\dag
U_{\varphi'}^\dag
U_\varphi
V_{\varphi'}
)_{11}
|^2
(\cosh2r_0-1)
\nonumber\\
&&
{}
+
\Re
[
e^{-2i\theta}
(
V_{\varphi'}^\dag
U_{\varphi'}^\dag
U_\varphi
V_{\varphi'}
)_{11}^2
]
\sinh2r_0
\Bigr),
\end{eqnarray}
with $(
V_{\varphi'}^\dag
U_{\varphi'}^\dag
U_\varphi
V_{\varphi'}
)_{11}
$ being the (1,1) element of the matrix $
V_{\varphi'}^\dag
U_{\varphi'}^\dag
U_\varphi
V_{\varphi'}
$.
Its Fourier transform yields
\begin{equation}
p(x|\varphi)
=\frac{1}{\sqrt{2\pi(\Delta x)_\theta^2}}
\exp\!\left(-\frac{x^2}{2(\Delta x)_\theta^2}\right).
\end{equation}
Then, using (\ref{eqn:HamiltonianApp}) and (\ref{eqn:Hamiltonian11}), the associated FI defined by~(\ref{eqn:FI}) becomes
\begin{eqnarray}
F(\varphi|\mathcal{P},\ket{\psi_\mathrm{opt}})
&=&\int_{-\infty}^\infty \rmd x\,p(x|\varphi)\left(
\frac{\partial}{\partial\varphi}
\ln p(x|\varphi)
\right)^2
\nonumber\\
&=&
\frac{1}{2}
\left(
\frac{\partial}{\partial\varphi}\ln(\Delta x)_\theta^2
\right)^2
\nonumber\\
&=&2\varepsilon_1^2
\sinh^22r_0
\frac{4\sin^2\theta\cos^2\theta}{(e^{2r_0}\cos^2\theta+e^{-2r_0}\sin^2\theta)^2}
\end{eqnarray}
at $\varphi'=\varphi$, which can be maximized by setting $\theta=\pm\tan^{-1}e^{2r_0}$ to get
\begin{equation}
F(\varphi|\mathcal{P},\ket{\psi_\mathrm{opt}})
=
2
\varepsilon_1^2
\sinh^22r_0
=
8
\varepsilon_1^2
\overline{N}(\overline{N}+1).
\end{equation}
This coincides with the upper bound of the QFI in~(\ref{eqn:OptQFISeq}), and proves the optimality of the circuit in Fig.~\ref{fig:MeasCircuit}.

\section*{References}

\end{document}